 \newcommand{\N}{\mathbb{N}}
 \newcommand{\R}{\mathbb{R}}
 \newcommand{\Z}{\mathbb{Z}}
 \newcommand{\X}{\mathcal{X}}
 \newcommand{\U}{\mathcal{U}}
 \newcommand{\diag}{\mathrm{diag}}
\crefname{equation}{}{} 
\crefname{section}{Sec.}{Sec.}
\crefname{enumi}{}{}
\crefname{assumption}{Assumption}{Assumptions}
\newtheorem{thm}{Theorem}
\crefname{thm}{Theorem}{Theorems}
\newtheorem{cor}[thm]{Corollary}
\crefname{cor}{Corollary}{Corollaries}
\newtheorem{lem}[thm]{Lemma}
\crefname{lem}{Lemma}{Lemmas}
\newtheorem{assum}[thm]{Assumption}
\crefname{assum}{Assumption}{Assumptions}
\newtheorem{defn}[thm]{Definition}
\crefname{defn}{Definition}{Definitions}
\newtheorem{exmp}[thm]{Example}
\newtheorem{rem}[thm]{Remark}
\crefname{alg}{Algorithm}{Algorithms}
\begin{document}
%
\title{Learning-based Model Predictive Control for Safe Exploration and Reinforcement Learning}
%
%
%

\author{Torsten~Koller$^*$,
        Felix~Berkenkamp$^*$,
        Matteo Turchetta,
        Joschka~Boedecker
        and Andreas Krause
\thanks{$^*$ T. Koller and F. Berkenkamp contributed equally to this work.}
\thanks{T. Koller and J. Boedecker are with the Neurorobotics Group, University of Freiburg, Freiburg, Germany (email: \{kollert, jboedeck\}@informatik.uni-freiburg.de\}).}
\thanks{F. Berkenkamp, M. Turchetta and A. Krause are with Learning and Adaptive Systems Group, ETH Zurich, Zurich, Switzerland (email: \{befelix, matteotu, krausea\}@inf.ethz.ch\}).}
\thanks{This work was supported by SNSF grant {200020\_159557}, a fellowship within the FITweltweit program of the German Academic Exchange Service (DAAD), an Open Philantropy Project AI fellowship, the Max Planck ETH Center for Learning Systems, and the BrainLinks-BrainTools  Cluster  of  Excellence  (grant  number  EXC1086). T. Koller is funded through the State Graduate Funding Program of Baden-Wuerttemberg}
}

\maketitle

\begin{abstract}
Reinforcement learning has been successfully used to solve difficult tasks in complex unknown environments. However, these methods typically do not provide any safety guarantees during the learning process. This is particularly problematic, since reinforcement learning agents actively explore their environment. This prevents their use in safety-critical, real-world applications.
In this paper, we present a learning-based model predictive control scheme that provides high-probability safety guarantees throughout the learning process.
Based on a reliable statistical model, we construct provably accurate confidence intervals on predicted trajectories. Unlike previous approaches, we allow for input-dependent uncertainties.
Based on these reliable predictions, we guarantee that trajectories satisfy safety constraints. Moreover, we use a terminal set constraint to recursively guarantee the existence of safe control actions at every iteration.
We evaluate the resulting algorithm to safely explore the dynamics of an inverted pendulum and to solve a reinforcement learning task on a cart-pole system with safety constraints.
\end{abstract}


%
\IEEEpeerreviewmaketitle


\section{Introduction}
\label{intro}
In model-based reinforcement learning (RL,\cite{Sutton1998Reinforcement}), we aim to learn the dynamics of an unknown system from data and, based on the model, derive a policy that optimizes the long-term behavior of the system. In order to be successful, these methods need to find an \textit{efficient} trade-off between \textit{exploitation}, i.e. finding and executing optimal policies based on the current knowledge of the system and \textit{exploration}, where the system traverses regions of the state-space that are unknown or uncertain to collect observations that improve the model estimation. 
In real-world safety-critical systems, we further need to restrict exploration to actions that are safe to perform by satisfying state and control constraints at all times.
In contrast, current approaches often use exploration schemes that lead to unpredictable and possibly unsafe behavior of the system, e.g., based on the \textit{optimism in the face of uncertainty} principle \cite{Xie2016Modelbased}. Consequently, such approaches have limited applicability to real-world systems. 

In this paper, we present a safe learning-based model predictive control (MPC) scheme for nonlinear systems with state-dependent uncertainty. The method guarantees the existence of feasible \textit{return trajectories} to a safe region of the state space at every time step with high probability. Specifically, we estimate the system uncertainty based on observations through statistical modelling techniques. Based on the statistical model, we derive two reliable uncertainty propagation techniques that allow for state-dependent uncertainty. We show that the resulting MPC scheme provides high probability safety guarantees and combine it with exploration schemes from RL. We show that our algorithm can safely learn about the system dynamics (system identification) when it is combined with an exploration scheme that aims to reduce the uncertainty about the model through safe exploration. Moreover, we apply our algorithm to a reinforcement learning task where the algorithm only aims to collect \textit{task-relevant} data in order to maximize expected performance on a RL task. In our experiments, we empirically show the importance of both, planning a safety trajectory in avoiding crashes and planning a performance trajectory in solving the task and how the design choices of our algorithms can improve the performance in a RL task. 

Preliminary results on one of the uncertainty propagation scheme for safe control, but without RL, were shown in \cite{Koller2018Learningbased}. This paper provides additional information, proposes an additional uncertainty propagation scheme, shows how the algorithm can be combined with RL, and provides a more thorough experimental evaluation.

\begin{figure}[t]
\includegraphics{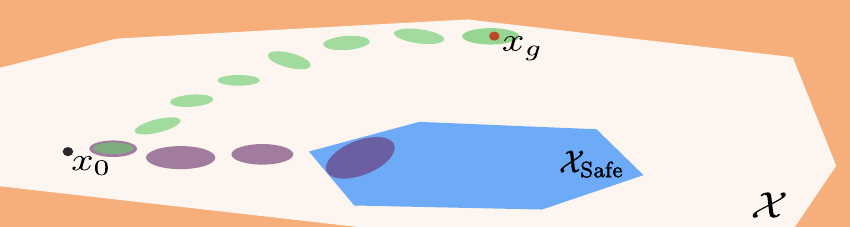}
\caption{Simultaneous planning of a performance trajectory (green ellipsoids) using an approximate uncertainty propagation technique and a safety trajectory (purple ellipsoids) using the proposed robust multi-step ahead prediction technique. While the performance trajectory optimizes an estimation of the expected long-term utility (e.g. distance to goal state $x_g$) of applying a control input to the system at $x_0$, the safety trajectory guarantees that the same input could eventually return the system to the safe set $\mathcal{X}_\mathrm{Safe}$ without violating the safety constraints $\mathcal{X}$ and control constraints $\mathcal{U}$.}
\label{main:mpc_rl:fig:safety_performance}
\end{figure}

\paragraph*{Outline}
We provide a brief introduction to our work and outline our contributions in \cref{intro}, relating them to previous work in \cref{related_work}. The problem statement is given in \cref{ps} and the necessary background for our research is provided in \cref{back}. In \cref{multi_step}, we derive our multi-step ahead prediction technique. In \cref{safempc}, we then use this technique to derive a MPC scheme and prove its safety. \cref{safe_rl} discusses how we can efficiently use this for exploration and to solve a safe model-based RL task. In \cref{practice}, we briefly address practical application issues and in \cref{exp}, we report our experimental results in safe exploration and safe RL on an inverted pendulum and a cart-pole system, respectively. We conclude our paper with a discussion in \cref{discuss}.

\section{Related work}
\label{related_work}
One area that considers safety guarantees in partially unknown systems is robust MPC. These methods iteratively optimize the performance along finite-length trajectories at each time step, based on a known model that incorporates uncertainties and disturbances acting on the system \cite{rawlings2009model}. In a constrained robust MPC setting, we optimize these local trajectories under additional state and control constraints. Safety is typically defined in terms of recursive feasibility and robust constraint satisfaction.
In~\cite{Sadraddini2016Provably}, this definition is used to safely control urban traffic flow, while~\cite{Carson2013Robust} guarantees safety by switching between a standard and a safety mode. However, these methods are conservative, since they do not update the model.

In contrast, \emph{learning-based MPC} approaches (LBMPC) adapt their models online based on observations of the system~\cite{Aswani2013Provably}. This allows the controller to improve over time, given limited prior knowledge of the system. Theoretical safety guarantees in LBMPC are established in \cite{Aswani2013Provably, Wabersich2018Linear}. Both approaches compute \textit{tubes} around trajectories of a known nominal linear model, which accounts for all disturbances. While the former approach requires disturbances to be bounded in an a-priori specified  polytope, the latter relies on sampling to approximate model errors. Robust constraint satisfaction and closed-loop stability for LBMPC with nominal linear models and state-dependent uncertainty is shown in \cite{Soloperto2018LearningBased}. However, the method relies on extensive offline computations. In \cite{Rosolia2019SampleBased}, instead of updating a model of the system, samples are used to grow a safe set in a robust MPC setting over time.   

Another area that considers learning for control is \emph{model-based RL}, where we learn a model from observations of our system and use it to find a policy for a given task based on an external cost signal \cite{Sutton1998Reinforcement}. One statistical model that is commonly used in model-based RL are \emph{Gaussian processes} (GP, \cite{Rasmussen2006Gaussian}). In RL, the GP uncertainty estimates are typically used to optimize the expected long-term cost of a feedback policy by propagating the state-dependent uncertainties in the forward predictions of the GP model over multiple time steps. However, trajectories have complex dependencies on states and unbounded stochastic uncertainties, which requires approximation techniques to get tractable solutions. The expected cost under the forward propagated approximate state distributions are then optimized w.r.t. parameterized or MPC-based policies \cite{Deisenroth2011PILCO,Cao2017Gaussian,Kamthe2017DataEfficient,Boedecker2014Approximate}.

In contrast to MPC, where we optimize finite-length trajectories based on a pre-specified system model, in RL we typically aim to find an infinite horizon optimal policy under unknown dynamics of our system. Hence, enforcing hard constraints in RL is challenging. In \cite{Garcia2012Safe}, a safe exploration heuristic for RL is introduced that uses the proximity to a finite set of safe states generated from pre-defined exploration policies as the safety criterion.
In \cite{Dalal2018Safe}, a safety layer is introduced that acts on top of an RL agent's possibly unsafe actions to prevent immediate negative consequences, but does not deal with negative long-term consequences of an action. 
Safety based on constrained Markov decision processes \cite{Altman1999Constrained} is considered in  \cite{Chow2019Lyapunovbased,Achiam2017Constrained}, where the RL agent additionally receives a constraint cost signal and the task is to keep the expected sum of constraint costs below a given threshold. The issue with approaches based on this notion of \textit{expected safety} is that we can still have arbitrary large and frequent violations of our constraints as we do not account for higher moments of the accumulated constraint cost distribution. Control-theoretic safety properties such as Lyapunov stability or robust constraint satisfaction have received little attention~\cite{Ernst2009Reinforcement}. In~\cite{Berkenkamp2017Safe}, safety is guaranteed by optimizing parametric policies under stability constraints, while~\cite{Akametalu2014Reachabilitybased} guarantees safety in terms of constraint satisfaction through reachability analysis. In \cite{Wabersich2018Safea}, a safety filter is introduced that employs robust MPC to correct for possibly unsafe actions of an RL agent. While the authors show safety in terms of constraint satisfaction, the RL agent and the safety mechanism are fundamentally detached from each other. This could potentially lead to suboptimal and possibly oscillatory behavior, where the RL agent constantly tries to approach the constraints.
In \cite{Chen2018Approximating}, methods from reinforcement learning are used to approximately solve the constrained linear quadratic regulator (LQR), but the system is assumed to be known and the approach is limited to linear systems. In \cite{Dean2018Safely}, a learning-based technique for the constrained LQR for unknown linear dynamics with sub-optimality bounds and statistical learning guarantees is proposed. 
In GP-based RL, safety is often considered through probabilistic \textit{chance constraints} \cite{Hewing2017Cautious,Jain2018Learning,Ostafew2016Robust} based on approximate uncertainty propagation. While often empirically successful, these approaches do not theoretically guarantee safety of the underlying system, since constraints are typically only enforced over finite horizons and without incorporating the approximation errors of the multi-step ahead predictions.




\section{Problem Statement}
\label{ps}
We consider a deterministic, discrete-time  dynamical system
\begin{equation}
\label{ps:eq:system}
x_{t+1} = f(x_t,u_t) = \underbrace{h(x_t,u_t)}_{\text{prior model}} + \underbrace{g(x_t,u_t)}_{\text{unknown error}},
\end{equation}
where $x_t \in \R^{p}$ is the state and $u_t \in \R^{q}$ is the control input to the system at time step $t\in \N$. We assume that we have access to a known, Lipschitz continuous prior model  $h(x_t,u_t)$ and that the \textit{a priori} unknown model-error $g(x_t,u_t)$ is Lipschitz continuous as well. The prior model $h$ could be based on an idealized first principles physics model that may be inaccurately identified. By collecting transition data, we want to reduce these model uncertainties and \textit{learn} the unknown model-error $g$ from the observations of our system. In order to learn about~$g$, we use a statistical model with mean and variance given by $(\mu_n(x_t,u_t),\Sigma_n(x_t,u_t))$. This provides us with a point-wise approximation of $g$, given by $\mu_n$, and corresponding \textit{input-dependent} uncertainty estimates $\Sigma_n(x_t,u_t) = \diag([\sigma_{n,1}^2(x_t,u_t),..,\sigma_{n,p}^2(x_t,u_t)])$, where the subscript $n \in \N$ denotes the number of observations we have collected from our system so far. As we collect more observations from the system during operation, we want to learn and to improve this estimate over time. 

We assume that the system is subject to polytopic state and control constraints
\begin{align}
\label{ps:ass:state_constraits}
\X = \{x \in \R^{p} | H_x x \leq h_x, \, h_x \in \R^{m_x} \}, \\
\label{ps:ass:ctrl_constraits}
\U = \{u \in \R^{q} | H_u u \leq h_u, \, h_u \in \R^{m_u} \},
\end{align}
which are given by the intersection of $m_x, m_u \in \N$ individual half-spaces, respectively. As a running example, we consider an autonomous driving scenario, where the state region could correspond to a highway lane or race track and the control constraints could represent the physical limits on acceleration and steering angle of the car.

In order to provide safety guarantees, we need reliable estimates of the model-error inside the operating region $\mathcal{X} \times \mathcal{U}$ of our system. Restricted to this region, we make the assumption that, with high probability jointly throughout the operating time of our system, at a finite number of inputs in each time step, values of our unknown but deterministic function  $g$ are contained in the confidence intervals of our statistical model.
\begin{assum}
\label{prob:ass:reliable_statistical_model}
 For all $n \in \N$ and a fixed sample size $T \in \N$, pick arbitrary sets of inputs $\mathcal{D}_n = \{z_1,..,z_T\} \subset \mathcal{X} \times \mathcal{U}$. For any confidence level $\delta \in (0,1]$, we have access to scaling factors $\beta_{n,T} > 0$ such that uniformly with probability greater than $1-\delta$, we have that $\forall \, n \in \N,\, 1\leq j \leq p, 1\leq k \leq T$: 
\begin{equation}
    |\mu_{n,j}(z_k) - g_j(z_k)| \leq \beta_{n,T} \cdot \sigma_{n,j}(z_k).
\end{equation}
\end{assum}
Intuitively, this means that we can trust our model to return well-calibrated estimates. For notational convenience, we drop the subscripts of $\beta_{n,T}$ and refer to it as $\beta$ in the remainder of the paper.
In general, providing these kind of guarantees is impossible for arbitrary functions~$g$. We provide further analysis in \cref{back} on how we can guarantee this sort of reliability when using a Gaussian process (GP) statistical model to approximate a certain class of deterministic functions. 

Lastly, we assume access to a backup controller that guarantees that we never violate our safety constraints once we enter a safe subset of the state space. In the autonomous driving example, this could be a simple linear controller that stabilizes the car in a small region in the center of the lane.
\begin{assum}
\label{ps:ass:safe_set}
We are given a controller $\pi_{\mathrm{safe}}(\cdot)$ and a polytopic safe region
\begin{equation}
\X_{\mathrm{safe}} := \{ x\in \R^{p} | H_s x \leq h_s, h_s \in \R^{m_{s}} \} \subseteq \X,
\end{equation}
at the intersection of ${m_s \in \N}$ individual half-spaces. We denote with
$x_{t+1} = f_{\pi_{\mathrm{safe}}} = f(x_t,\pi_{\mathrm{safe}}(x_t))$ the closed-loop system under $\pi_{\mathrm{safe}}$ and assume for some arbitrary $k \in \N$ that
\begin{equation}
   x_k \in \mathcal{X}_{\mathrm{safe}} \Rightarrow  f_{\pi_{\mathrm{safe}}}(x_t) \in \mathcal{X}, \quad  \forall t \geq k.
\end{equation}
\end{assum}

This assumption allows us to gather initial data from the system even in the presence of significant model errors, since the system remains safe under the controller~$\pi_\mathrm{safe}$. Moreover, we can still guarantee constraint satisfaction asymptotically outside of~$\X_\mathrm{safe}$, if we can show that a finite sequence of control inputs eventually steers the system back to the safe set. In \cite{Koller2018Learningbased}, we stated a stricter assumption that required $\X_\mathrm{safe}$ to be robust control positive invariant (RCPI) under $\pi_{\mathrm{safe}}$, i.e. the system is not allowed to leave the safe set once it enters it. Our relaxed definition only requires the safe set to be a \emph{subset} of a RCPI set that is contained in the set of admissible states~\cite{Wabersich2018Linear}.
Given a controller $\pi$, ideally we want to enforce the state- and control constraints at every time step,
\begin{equation}
\label{ps:eq:safety_deterministic}
\forall t \in \N: f_\pi(x_t) \in \mathcal{X}, \, \pi(x_t) \in \mathcal{U}.
\end{equation}
Apart from $\pi_{\mathrm{safe}}$, which trivially and conservatively fulfills this if $x_0 \in  \X_{\mathrm{safe}}$, it is in general impossible to design a controller that enforces
\eqref{ps:eq:safety_deterministic} without additional assumptions. This is due to the fact that we only have access to noisy observations of our system. Instead, we slightly relax this requirement to \textit{safety with high probability} throughout its operation time.
\begin{defn}
\label{ps:def:epsilon_safety}
Let $\pi: \R^{p} \to \R^{q}$ be a controller for \eqref{ps:eq:system} with the corresponding closed-loop system $f_\pi$.
Let $x_0 \in \mathcal{X}_{\mathrm{safe}}$ and $\delta \in (0,1]$. A system is $\delta-$safe under the controller $\pi$ iff:
\begin{equation}
\Pr\left[ \, \forall t \in \N:  f_\pi(x_t) \in \mathcal{X}, \, \pi(x_t) \in \mathcal{U}  \right] \geq 1-\delta.
\end{equation}
\end{defn}
As a comparison, in chance constrained MPC we typically get \textit{per time step} high probability constraint satisfaction. Informally speaking, there we have definitions of the form $\forall t \in \N: \Pr[\text{No collision at time }t] \geq 1-\delta$. In contrast, \cref{ps:def:epsilon_safety} requires high probability safety guarantees that are \emph{independent} from the duration of system operation.

Aside from these mandatory safety requirements, we are provided with a mission objective and we want our controller to improve its performance in solving a given task as we learn about our system trough observations. In this paper, we are particularly interested in two intimately related tasks. Firstly, we are interested in safely and efficiently identifying the system by exploring the safe region of the state space.
Secondly, we want to find a controller that solves a RL task by minimizing the accumulated sum of costs our system receives in every state while still remaining safe at all times. This naturally entails the former task, namely,  a mechanism that can safely explore the state space while trying to solve the task. The next section elaborates on this.

\section{Background}
\label{back}
In this section, we provide a brief introduction to model-based RL and introduce the necessary background on GPs and set-theoretic properties of ellipsoids that we use to model our system and perform multi-step ahead predictions.

\subsection{Model-based Reinforcement Learning}
In reinforcement learning (RL), an agent interacts with an unknown environment according to a policy $\pi: \R^p \rightarrow \R^q$ by applying an action $u_t = \pi(x_t) \in \R^q$ in a state $x_t \in \R^p$. The environment defines a probability distribution over the next state $X_{t+1} \sim p(x_{t+1}|x_t,u_t)$. After every time step the agent receives a scalar cost $c(x_t,u_t) \in \R$. 
The performance of an agent is then measured through the \textit{expected cumulative cost}
\begin{equation}
\label{eq:back:model_based_rl:exp_cost}
    V^\pi(x) = \sum_{t=0}^{\infty}\mathbb{E}_{x_t \sim X_t}[ \gamma^t c(x_t,\pi(x_t))| x_0 = x], \, x \in \R^p,
\end{equation}
where the \textit{discount-factor} $\gamma \in [0,1)$ guarantees convergence of the infinite summation. In our special case of a deterministic environment, we can remove the expectation over state trajectories. 
The goal of a reinforcement problem is to find 
\begin{equation}
\label{eq:back:model_based_rl:optimization}
    \pi^* = \arg\min_{\pi} \, V^\pi(x),
\end{equation}
the policy that minimizes the expected cost of interacting with its environment.

Approaches to solving these RL tasks can generally be subsumed under \textit{model-based} and \textit{model-free} RL algorithms. We are interested in the former, where we try to \textit{learn} an approximate model of the environment from experience samples  $(x_t,u_t,x_{t+1})$ that allows us to directly approximate $V^\pi(x)$ in equation \eqref{eq:back:model_based_rl:exp_cost} for an agent $\pi$. This is typically done by replacing the infinite sum in \eqref{eq:back:model_based_rl:exp_cost} with a finite horizon look-ahead and using the model to approximate the expectation over trajectories. Depending on the class of policies under consideration, various ways of optimizing \eqref{eq:back:model_based_rl:optimization} are proposed in the literature, including MPC \cite{Boedecker2014Approximate,Kamthe2017DataEfficient}.

\subsection{Gaussian Processes and RKHS}
One way to learn the unknown model-error $g$ from data is by using a GP model.
A $\mathcal{GP}(m,k)$ is a distribution over functions, which is fully specified through a mean function~$m: \R^{d} \to \R$ and a \textit{covariance function} $k: \R^d \times \R^d \to \R$, where $d = p + q$. Given a set of $n$ noisy observations\footnote{For notational simplicity, we only outline the equations for $p=1$, i.e. a single-output system.} $\hat{f}_i = f(z_i) + w_i, \, w_i \sim \mathcal{N}(0,\lambda^2),\, z_i = (x_i,u_i), \,i = 1,\dots,n, \, \lambda \in \R$, we choose a zero-mean prior on $g$ as $m \equiv 0$ and regard the differences $y_n = [\hat{f}_1 - h(z_1),\dots,\hat{f}_n - h(z_n)]^T$ between prior model $h$ and observed system response at input locations $Z = [z_1,..,z_n]^T$. The posterior distribution at~$z$ is then given as a Gaussian $\mathcal{N}(\mu_n(z),\sigma_n^2(z))$
with mean and variance
\begin{align}
\label{found:gp:pred_mu}
\mu_n(z) &= k_n(z)^\mathrm{T} [K_n+\lambda^2 I_n]^{-1}y_n, \\
\label{found:gp:pred_var}
\sigma^2_n(z) &= k(z,z) - k_n(z)^\mathrm{T} [K_n + \lambda^2 I_n]^{-1}k_n(z),
\end{align}
where $[K_n]_{ij} = k(z_i,z_j), [k_n(z)]_j = k(z,z_j)$, and $I_n$ is the $n-$dimensional identity matrix.
In the case of multiple outputs $p > 1$, we model each output dimension with an independent GP, $\mathcal{GP}(m_j,k_j), j = 1,..,p$. We then redefine \eqref{found:gp:pred_mu} and \eqref{found:gp:pred_var} as $\mu_n(\cdot) = (\mu_{n,1}(\cdot),..,\mu_{n,n_x}(\cdot))$ and $\sigma_n(\cdot) = (\sigma_{n,1}(\cdot),..,\sigma_{n,n_x}(\cdot))$ corresponding to the predictive mean and variance functions of the individual models.

\subsubsection{RKHS}
We consider the space of functions $\mathcal{H}_k$ that is defined through the completion of all possible posterior mean functions $\eqref{found:gp:pred_mu}$ of a $\mathcal{GP}(0,k)$, i.e.
\begin{equation}
\label{back:eq:gp:rkhs}
    \mathcal{H}_k = \left\{\sum_{l=0}^\infty \alpha_l k(z,z_l) | \, \alpha_l \in \R, z,z_l \in \R^{d}  \right\}.
\end{equation}
This class of functions is well-behaved in the sense that they form a reproducing kernel Hilbert space (RKHS, \cite{wahba1990spline}) equipped with an inner-product $\langle \cdot,\cdot \rangle_k$. The induced norm $||\cdot||_k^2$ is a measure of the \textit{complexity} of a function in $\mathcal{H}_k$.
We consider the case, where our model-error $g$ is indeed of the form $g(z) = \sum_{l=0}^\infty \alpha_l k(z,z_l), \, \alpha_l \in \R, z,z_l \in \R^{d}$, i.e. $g \in \mathcal{H}_k$. Using results from the statistical learning theory literature, we can directly derive useful properties from this assumption.
\begin{lem}
\label{gp:lemma:lipschitz}
\cite[Lemma 2]{Berkenkamp2016Safe}:
Let g have bounded RKHS norm $||g||_k \leq B_g$ induced by a continuously differentiable kernel $k$. Then $g$ is Lipschitz continuous.
\end{lem}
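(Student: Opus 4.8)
The plan is to exploit the reproducing property of $\mathcal{H}_k$ together with the Cauchy--Schwarz inequality, which reduces the claim to a statement purely about the kernel. Since $g \in \mathcal{H}_k$, the reproducing property gives $g(z) = \langle g, k(\cdot,z)\rangle_k$ for every input $z$, so that for any two points $z,z' \in \X\times\U$
\begin{equation}
    g(z) - g(z') = \langle g,\, k(\cdot,z) - k(\cdot,z')\rangle_k .
\end{equation}
Applying Cauchy--Schwarz together with the norm bound $\|g\|_k \le B_g$ yields
\begin{equation}
    |g(z) - g(z')| \le B_g \, \big\| k(\cdot,z) - k(\cdot,z') \big\|_k ,
\end{equation}
so the entire problem collapses to controlling the kernel-induced distance $\|k(\cdot,z)-k(\cdot,z')\|_k$ by a constant multiple of the Euclidean distance $\|z-z'\|_2$.

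For the key step I would expand the squared RKHS norm, again via the reproducing property, so as to eliminate all reference to the infinite-dimensional feature map:
\begin{equation}
    \big\| k(\cdot,z) - k(\cdot,z') \big\|_k^2 = k(z,z) - 2k(z,z') + k(z',z') .
\end{equation}
Denote the right-hand side by $\rho(z,z')^2$. Since the operating region $\X\times\U$ is compact and $k$ is continuously differentiable, I would establish $\rho(z,z') \le L_k \|z-z'\|_2$ for a uniform constant $L_k$. Concretely, one views the right-hand side as a second-order mixed difference of the $C^1$ map $(a,b)\mapsto k(a,b)$ and integrates the partial derivative in each slot along the segment joining $z'$ to $z$; continuity of the derivatives of $k$ together with the extreme value theorem over the compact set $\X\times\U$ then furnishes the uniform bound. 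Combining this with the previous display gives $|g(z)-g(z')| \le B_g L_k \|z-z'\|_2$, i.e.\ $g$ is Lipschitz on the operating region with constant $L_g := B_g L_k$.

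The main obstacle is precisely this middle step: converting the smoothness of $k$ into the \emph{linear} estimate $\rho(z,z')\le L_k\|z-z'\|_2$, since a bound relying only on uniform continuity of the derivatives of $k$ degrades to a Hölder-type rate rather than a genuine Lipschitz one. The cleanest way to obtain an honest Lipschitz constant is to pass to the gradient of $g$ directly: under the smoothness of $k$ the partial derivatives admit a derivative-reproducing representation $\partial_i g(z) = \langle g, \partial_i k(\cdot,z)\rangle_k$, where $\partial_i k(\cdot,z)\in\mathcal{H}_k$ is the partial derivative of $k$ in the $i$-th coordinate of its second argument. Cauchy--Schwarz then bounds $|\partial_i g(z)|$ by $B_g$ times the square root of the corresponding mixed partial derivative of $k$ evaluated on the diagonal; bounding this derivative kernel over the compact set $\X\times\U$ and integrating $\nabla g$ along line segments yields the Lipschitz property. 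Either route reduces the lemma to a bound on the derivatives of the fixed kernel $k$, which is exactly the information encoded in the smoothness assumption.
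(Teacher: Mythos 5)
You have not missed anything from the paper here: the paper gives no proof of this lemma at all, importing it verbatim from \cite[Lemma 2]{Berkenkamp2016Safe}, and your argument is essentially the proof used in that reference --- reproducing property plus Cauchy--Schwarz, reducing the claim to bounding the kernel metric $\sqrt{k(z,z)-2k(z,z')+k(z',z')}$ by a constant times $\|z-z'\|_2$. You also correctly isolate the one genuine subtlety: a first-order Lipschitz bound on $k$ alone only yields a H\"older-$1/2$ estimate, so ``continuously differentiable kernel'' must be read in the usual RKHS sense that the mixed partials $\partial^2 k/\partial a_i \partial b_j$ exist and are continuous, which is precisely what makes either your second-order mixed-difference estimate or the derivative-reproducing route ($\partial_i g(z) = \langle g, \partial_i k(\cdot,z)\rangle_k$, bounded over the compact set $\X\times\U$ and integrated along segments) go through.
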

That is, the regularity assumption on our model-error and the smoothness assumption on the covariance function $k$ directly imply that the function $g$ is Lipschitz.

More importantly, the assumption $g \in \mathcal{H}_k$ guarantees that using GPs to model the unknown part of the system~\eqref{ps:eq:system}, provides us with reliable confidence intervals on the model-error $g$ as required in \cref{prob:ass:reliable_statistical_model}.
\begin{lem}
\label{gp:lemma:rkhs_confidence_ivals}
\cite[Lemma 2]{Berkenkamp2017Safe}: Assume $||g||_k \leq B_g$ and that measurements are corrupted by $\lambda$-sub-Gaussian noise. Let $\beta_n = B_g + 4\lambda \sqrt{\gamma_n + 1 + \ln(1/\delta)}$, where $\gamma_n$ is the information capacity associated with the kernel $k$. Then with probability at least $1-\delta$, with $\delta \in (0,1)$, we have for all $ 1\leq j \leq n_x,\, z \in \mathcal{X} \times \mathcal{U}$ that $|\mu_{n-1,j}(z) - g_j(z)| \leq \beta_n \cdot \sigma_{n-1,j}(z)$.
\end{lem}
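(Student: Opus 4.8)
The plan is to derive this as a self-normalized concentration bound for kernel ridge regression, which is the standard route to GP confidence intervals of this form \cite{Berkenkamp2017Safe}. Since each output dimension is modelled by an independent GP, it suffices to treat the scalar case and then take a union bound over the $1 \le j \le n_x$ coordinates (rescaling $\delta$, with the resulting $\ln n_x$ absorbed into the constants). So fix a coordinate, drop the index $j$, and recall that the regression targets are the prior-corrected observations $\hat f_i - h(z_i) = g(z_i) + w_i$, i.e. $y_{n-1} = g_{n-1} + w_{n-1}$ with $g_{n-1} = [g(z_1),\dots,g(z_{n-1})]^\mathrm{T}$ and $w_{n-1}$ the $\lambda$-sub-Gaussian noise. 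Substituting into \eqref{found:gp:pred_mu} and abbreviating $A = K_{n-1} + \lambda^2 I_{n-1}$ gives
\begin{equation}
\mu_{n-1}(z) - g(z)
= \underbrace{k_{n-1}(z)^\mathrm{T} A^{-1} g_{n-1} - g(z)}_{\text{bias}}
+ \underbrace{k_{n-1}(z)^\mathrm{T} A^{-1} w_{n-1}}_{\text{noise}}.
\end{equation}
The whole argument then reduces to bounding each term by a multiple of the posterior standard deviation $\sigma_{n-1}(z)$ from \eqref{found:gp:pred_var}.

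For the \emph{bias} term I would use the RKHS structure of \eqref{back:eq:gp:rkhs}. By the reproducing property $g(z) = \langle g, k(\cdot,z)\rangle_k$, the noiseless predictor equals $\langle g, v_z\rangle_k$ with $v_z = \sum_i [A^{-1}k_{n-1}(z)]_i\, k(\cdot,z_i)$, so Cauchy--Schwarz in $\mathcal{H}_k$ yields $|\text{bias}| \le \|g\|_k\,\|v_z - k(\cdot,z)\|_k$. A short computation shows $\|v_z - k(\cdot,z)\|_k^2 = \sigma_{n-1}^2(z) - \lambda^2\, k_{n-1}(z)^\mathrm{T} A^{-2} k_{n-1}(z) \le \sigma_{n-1}^2(z)$, whence $|\text{bias}| \le B_g\,\sigma_{n-1}(z)$ using $\|g\|_k \le B_g$. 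This is the routine part and accounts for the $B_g$ summand of $\beta_n$; note that the bound is deterministic and holds for every $z \in \X\times\U$.

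The \emph{hard part} is the \emph{noise} term, a self-normalized sum of the sub-Gaussian variables $w_{n-1}$ weighted by the data-dependent vector $A^{-1}k_{n-1}(z)$. Passing to the (possibly infinite-dimensional) feature map $\phi$ of $k$ and writing $V_{n-1} = \lambda^2 I + \sum_i \phi(z_i)\phi(z_i)^\mathrm{T}$, one can rewrite the term as $\phi(z)^\mathrm{T} V_{n-1}^{-1} S_{n-1}$ with $S_{n-1} = \sum_i w_i\,\phi(z_i)$, and Cauchy--Schwarz in the $V_{n-1}^{-1}$-norm separates it into a predictive-variance factor $\|\phi(z)\|_{V_{n-1}^{-1}}$, which is proportional to $\sigma_{n-1}(z)$, times the self-normalized norm $\|S_{n-1}\|_{V_{n-1}^{-1}}$. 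The latter is controlled by the martingale \emph{method of mixtures}: treating $w_i$ as a martingale-difference sequence adapted to the filtration generated by the past inputs, the mixture bound gives, with probability at least $1-\delta$ and \emph{simultaneously for all} $n$, a bound scaling as $\lambda\sqrt{\tfrac12\ln\det(I + \lambda^{-2}K_{n-1}) + \ln(1/\delta)}$. Bounding $\tfrac12\ln\det(I + \lambda^{-2}K_{n-1})$ by the information capacity $\gamma_n$ then collapses the log-determinant into the stated $\gamma_n$ term.

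Combining the two estimates yields $|\mu_{n-1}(z) - g(z)| \le \beta_n\,\sigma_{n-1}(z)$ with $\beta_n = B_g + 4\lambda\sqrt{\gamma_n + 1 + \ln(1/\delta)}$, where the universal constant $4$ and the precise $\lambda$- and $+1$-dependence follow from tracking the numerical factors and the regularization convention of \cite{Berkenkamp2017Safe}; a union bound over the $n_x$ output coordinates gives the joint statement, and the good event is a statement about the noise alone, so the bound holds uniformly over all $z \in \X\times\U$. I expect the genuine obstacle to be exactly the noise step: verifying the supermartingale property so that the tail bound is \emph{anytime} (uniform over $n \in \N$, as needed for compatibility with \cref{prob:ass:reliable_statistical_model}), and bookkeeping the constants so that the self-normalizer reduces cleanly to $\gamma_n$.
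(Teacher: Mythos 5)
The paper does not prove this lemma at all: it is imported verbatim by citation from \cite{Berkenkamp2017Safe}, whose Lemma~2 in turn rests on the self-normalized concentration bound of Chowdhury and Gopalan for kernelized bandits. Your sketch is a correct reconstruction of exactly that argument: the bias/noise decomposition of $\mu_{n-1}(z)-g(z)$, the deterministic RKHS Cauchy--Schwarz bound $\lVert v_z - k(\cdot,z)\rVert_k^2 = \sigma_{n-1}^2(z) - \lambda^2 k_{n-1}(z)^\mathrm{T}A^{-2}k_{n-1}(z) \le \sigma_{n-1}^2(z)$ giving the $B_g$ term, and the method-of-mixtures supermartingale bound on $\lVert S_{n-1}\rVert_{V_{n-1}^{-1}}$ (which is what makes the statement anytime-uniform in $n$) combined with $\tfrac12\ln\det(I+\lambda^{-2}K_{n-1})\le\gamma_{n}$ giving the second term. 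One point where your bookkeeping diverges from the source: the multi-output case is not handled there by a union bound over the $p$ coordinates with a rescaled $\delta$ and an absorbed $\ln n_x$; instead the $p$ independent GPs are folded into a single scalar GP over the extended index set $\tilde{\mathcal{Z}}=\X\times\U\times\mathcal{I}$, and the joint guarantee over all coordinates is obtained because $\gamma_n$ is defined with $\tilde n = n\cdot p$ virtual observations on that extended space (this is visible in the paper's own definition of $\gamma_n$ following the lemma). Both routes are valid, but only the latter reproduces the stated $\beta_n$ without hidden $\ln p$ factors; the remaining constant $4$ versus $\sqrt{2}$ is, as you say, pure constant-tracking in the cited reference.
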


In combination with the prior model $h(z)$, this allows us to construct reliable confidence intervals around the true dynamics of the system \cref{ps:eq:system}. The scaling $\beta_n$ depends on the number of data points~$n$ that we gather from the system through the information capacity,
$\gamma_n = \max_{\mathcal{A} \subset \tilde{\mathcal{Z}} , |A| = \tilde{n}} I(\tilde{g}_A;g), \,  \tilde{\mathcal{Z}} = \mathcal{X} \times \mathcal{U} \times \mathcal{I}, \, \tilde{n} = n \cdot p, \, \mathcal{I} = \{1,..,p\}$, i.e. the maximum mutual information $I(\tilde{g}_A,g)$ between a finite set of samples $A$ and the function $g$.
Exact evaluation of $\gamma_n$ is NP-hard in general, but it can be greedily over-approximated with bounded error and has sublinear dependence on $n$ for many commonly used kernels~\cite{Srinivas2010Gaussian}. \cref{gp:lemma:rkhs_confidence_ivals} is stronger than required by \cref{prob:ass:reliable_statistical_model}, since the confidence level $\delta$ holds jointly for all $z \in \mathcal{X} \times \mathcal{U}$ and not only for a finite set of $T$ samples. As a side note, we refer to a result in \cite{Srinivas2010Gaussian} showing that \cref{prob:ass:reliable_statistical_model} holds if our system transitions $x_{t+1}$ is represented by stochastic draws from the GP posterior \cref{found:gp:pred_mu,found:gp:pred_var}, with $z = (x_t, u_t)$.\footnote{Due to the difficulty of designing a safety controller and safe region that would fulfill \cref{ps:ass:safe_set} in such a stochastic system (see \cite{Vinogradska2016Stability} for a discussion), we nonetheless consider a deterministic system.}

\subsection{Ellipsoids}
We use ellipsoids to bound the uncertainty of our system when making multi-step ahead predictions.
Due to appealing geometric properties, ellipsoids are widely used in the robust control community to compute \textit{reachable sets} \cite{Filippova2017Ellipsoidal,asselborn2013control}. These sets intuitively provide an outer approximation on the next state of a system considering all possible realizations of uncertainties when applying a controller to the system at a given set-valued input.
We briefly review some of these properties and refer to \cite{Kurzhanskii1997Ellipsoidal} for an exhaustive introduction to ellipsoids and to the derivations for the following properties.

We use the basic definition of an ellipsoid,
\begin{equation}
\label{eq:found:ell:def}
\mathcal{E}(p,Q) := \{ x \in \R^n | (x-p)^\mathrm{T} Q^{-1}(x-p) \leq 1\},
\end{equation}
with  \textit{center} $p \in \R^n$ and a symmetric positive definite (s.p.d) \textit{shape matrix} $Q \in \R^{n \times n}$.
Ellipsoids are invariant under \textit{affine subspace transformations} such that, for $A \in \R^{n \times r}, r \leq n$ with full column rank and $b \in \R^r$, we have that
\begin{equation}
\label{multi_step:ell:affine_trafo}
A \cdot \mathcal{E}(p,Q) + b = \mathcal{E}(p+b,AQA^\mathrm{T}).
\end{equation}
The \textit{Minkowski sum} $\mathcal{E}(p_1,Q_1) \oplus \mathcal{E}(p_2,Q_2)$, i.e. the pointwise sum between two arbitrary ellipsoids, is in general not an ellipsoid anymore, but we have
\begin{equation}
\label{multi_step:ell:minkowski_ellipsoids}
\mathcal{E}(p_1,Q_1) \oplus \mathcal{E}(p_2,Q_2) \subset \mathcal{E}_c(\tilde{p},\tilde{Q}), 
\end{equation}
where $\tilde{p} = p_1 + p_2, \,\tilde{Q} = (1+c^{-1})Q_1 + (1+c)Q_2$ for all $c > 0$. Moreover, the minimizer of the trace of the resulting shape matrix is analytically given as $c = \sqrt{Tr(Q_1) / Tr(Q_2)}$. A particular problem that we encounter is finding the maximum distance $r$ to the center of an ellipsoid $\mathcal{E}(0,Q)$ under a special transformation, i.e.
\begin{align}
\label{found:ell:prob:rayleigh_variant}
r(Q,S) = \max_{x \in \mathcal{E}(p,Q)} ||S(x-p)||_2 =  \max_{s^\mathrm{T} Q^{-1}s \leq 1} s^\mathrm{T} S^\mathrm{T} S s,
\end{align}
where $S \in \R^{m \times n}$ with full column rank. This is a generalized eigenvalue problem of the pair $(Q,S^TS)$ and the optimizer is given as the square-root of the largest generalized eigenvalue.


\section{Robust multi-step ahead predictions}
\label{multi_step}

In order to plan safe trajectories based on our statistical model, we need to reliably estimate the region of the state space that can be reached over multiple time steps under a sequence of control inputs. Based on \cref{prob:ass:reliable_statistical_model} and our prior model~$h(x_t,u_t)$, we directly obtain high-probability confidence intervals on~$f(x_t,u_t)$ uniformly for all $t \in \N$ given a single control input $u_t$. We extend this to over-approximate the system after a sequence of inputs $(u_t,u_{t+1},..,u_{T-1})$. The result is a sequence of set-valued confidence regions that contain the true trajectory of the system with high probability.

\subsection{One-step ahead predictions}
We derive a function that takes as an input an ellipsoidal subset of the state space and outputs a second ellipsoid providing a confidence region around the next state of the system under a specific control input, given that the current state is contained in the input ellipsoid.

\begin{figure}[t]
    \includegraphics{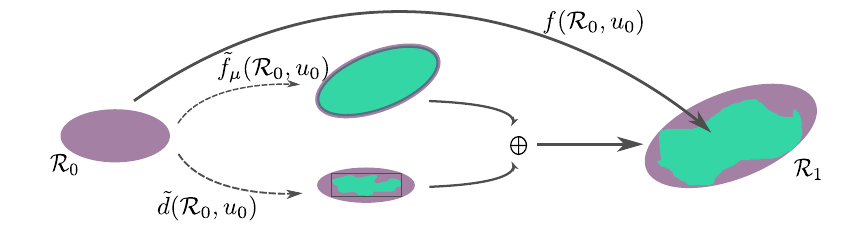}
    \caption{Decomposition of the over-approximated image of the system \cref{ps:eq:system} under an ellipsoidal input $R_0$. The exact, unknown image of $f$ (right, green area) is approximated by the linearized model $\tilde{f}_\mu$ (center, top) and the remainder term $\tilde{d}$, which accounts for the confidence interval and the linearization errors of the approximation (center, bottom). The resulting ellipsoid $R_1$ is given by the Minkowski sum of the two individual approximations.}
    \label{main:multi_step:fig:one_step_overapproximation}
\end{figure}

In order to approximate the system, first we linearize our prior model $h(x_t,u_t)$ and use the affine transformation property \eqref{multi_step:ell:affine_trafo} to compute the ellipsoidal next state of the linearized prior model. Next, we approximate the unknown model-error $g(x_t,u_t)$ using the confidence intervals of our statistical model. We propose a locally constant approximation of $g$ in \cref{robust_multistep:one_step:locally_const}, while in \cref{robust_multistep:one_step:locally_linear}, we locally linearize our statistical model.  We finally apply Lipschitz arguments to outer-bound the approximation errors. We sum up these individual approximations, which result in an ellipsoidal approximation of the next state of the system. This is illustrated in~\cref{main:multi_step:fig:one_step_overapproximation}. We formally derive the necessary equations in the following paragraphs. The reader may choose to skip the technical details of these approximations, which result in ~\cref{main:multi_step:lemma:one_step}. To guide the interested reader through the derivations, we provide a list of important variables functions and constants in \cref{tab:variables}.

\begin{table}[t]
\label{not:table:definitions}
\caption{List of important variables, functions and constants.} 
\begin{tabularx}{\linewidth}{c|c|X}
Variable & Type & Definition \\
\hline
$f$ & Function& Dynamical system under consideration.\\
$h$ & Function& Known part of the system $f$.\\
$g$ & Function& Unknown part of the system $f$.\\
$x_t$ & Vector& State of the system at time $t$.  \\
$u_t$ & Vector& Control input to the system at time $t$.\\
$\mu_n$ & Function & {Predictive mean of a statistical model with $n$ observations.}\\
$\sigma_n^2$ & Function& {Predictive variance of a statistical model with $n$ observations.}\\
$\beta$ & Scalar & {Scaling factor for the confidence intervals of a statistical model.} \\
$\mathcal{R}$ & Set& {Ellipsoidal set of states $\mathcal{R} = \mathcal{E}(p,Q)$, with center $p$ and shape matrix $Q$.} \\
$J_\phi$ & Matrix & {Jacobian matrix of a function $\phi$ with $J_\phi = [A_\phi, B_\phi]$, where $A_\phi, B_\phi$ are the Jacobians w.r.t. the state and control inputs, respectively.} \\
$L_\phi$ & Vector & {Vector with Lipschitz constants for each output of a function $\phi$.} \\
$P_{\phi}^{\bar{z}}$ & Function & {Taylor approximation of a function $\phi$ around linearization point $\bar{z}$.} \\
$\tilde{m}(\mathcal{R},\pi_t)$& Function & {Function taking ellipsoidal set of states $\mathcal{R}$ and affine feedback controller $\pi_t$. Outputs ellipsoidal over-approximation of next system state. }
\end{tabularx}
\label{tab:variables}
\end{table}

\subsubsection{Predictions via locally constant model approximation}
\label{robust_multistep:one_step:locally_const}
We first regard the system $f$ in \eqref{ps:eq:system}
for a single input vector $z = (x,u), \,f(z) = h(z) + g(z)$.
We linearly approximate $f$ around $\bar{z} = (\bar{x},\bar{u})$ via
\begin{equation}
\label{main:multi_step:eq:taylor}
f(z) \approx h(\bar{z}) + J_{h}(\bar{z}) (z-\bar{z}) + g(\bar{z}) = \tilde{f}(z),
\end{equation}
where $J_{h}(\bar{z}) = [A_h,B_h]$ is the Jacobian of $h$ at $\bar{z}$.

Next, we use the Lagrangian remainder theorem~\cite{Breiman1993Deterministic} on the linearization of $h$ and apply a continuity argument on our locally constant approximation of $g$. This results in an upper-bound on the approximation error,
\begin{equation}
\label{main:eq:linearization_approx}
|f_j(z) - \tilde{f}_j(z) | \leq \frac{L_{\nabla h,j}}{2} ||z - \bar{z}||_2^2 + L_g ||z-\bar{z}||_2,
\end{equation}
where~$f_j(z)$ is the $j$th component of~$f$, $ 1\leq j \leq p$, $L_{\nabla h,j}$ is the Lipschitz constant of the gradient $\nabla h_j$, and $L_g$ is the Lipschitz constant of~$g$, which exists by \cref{gp:lemma:lipschitz}.

The function~$\tilde{f}$ depends on the unknown model error~$g$. We approximate~$g$ with the statistical model, $\mu(\bar{z}) \approx g(\bar{z})$. From \cref{prob:ass:reliable_statistical_model} we have
\begin{equation}
\label{main:eq:rkhs_bound}
|g_j(\bar{z})-\mu_{j}(\bar{z})| \leq \beta \sigma_{j}(\bar{z}), \, 1\leq j \leq p,
\end{equation}
with high probability.
We combine \eqref{main:eq:linearization_approx} and \eqref{main:eq:rkhs_bound} to obtain
\begin{equation}
\label{main:eq:errorbound_deterministic}
| f_j(z) - \tilde{f}_{\mu,j}(z)| \leq \beta \sigma(\bar{z}) + \frac{L_{\nabla h,j}}{2} ||z - \bar{z}||_2^2 + L_g ||z-\bar{z}||_2,
\end{equation}
where $1 \leq j \leq p$ and
$
\tilde{f}_\mu(z) = h(\bar{z}) + J_{h}(\bar{z}) (z-\bar{z}) + \mu_n(\bar{z}).
$
We can interpret  \eqref{main:eq:errorbound_deterministic} as the edges of the confidence hyper-rectangle
\begin{equation}
\label{main:eq:model:deterministic}
\tilde{m}(z)  = \tilde{f}_\mu(z) \pm \bigg[ \beta\sigma(\bar{z})+\frac{L_{\nabla h}}{2} ||z - \bar{z}||_2^2 + L_g ||z-\bar{z}||_2 \bigg],
\end{equation}
where $L_{\nabla h} = [L_{\nabla h,1},..,L_{\nabla h,p}]$ and we use the shorthand notation $a \pm b := [a_1 \pm b_1] \times [a_{p} \pm b_{p}],\, a,b \in \R^{p}$.

We are now ready to compute a confidence region based on an ellipsoidal state $\mathcal{R} = \mathcal{E}(p,Q) \subset \R^{p}$ and a fixed input $u \in \R^{q}$, by over-approximating the output of the system $f(\mathcal{R},u) = \{ f(x,u)| x \in R\}$ for all inputs contained in an ellipsoid $\mathcal{R}$. Here, we choose $p$ as the linearization center of the state and choose $\bar{u} = u$, i.e. $\bar{z} = (p,u)$. Since the function $\tilde{f}_\mu$ is affine, we can make use of \eqref{multi_step:ell:affine_trafo} to compute
\begin{equation}
\label{multi_step:eq:linear_part:affine_trafo}
\tilde{f}_\mu(\mathcal{R},u) = \mathcal{E}(h(\bar{z}) + \mu(\bar{z}),A_hQA_h^\mathrm{T}),
\end{equation}
which results in an ellipsoid.
This is visualized in~\cref{main:multi_step:fig:one_step_overapproximation} by the upper ellipsoid in the center.
To upper-bound the confidence hyper-rectangle on the right hand side of \eqref{main:eq:model:deterministic}, we upper-bound the term~$\| z - \bar{z} \|_2$ by
\begin{equation}
\label{main:eq:remainder_overapproximation_setinput}
l(\mathcal{R},u) := \max\limits_{\substack{z(x) = (x,u), \\ x \in \mathcal{R}}} ||z(x) - \bar{z}||_2,
\end{equation}
which leads to
\begin{equation}
\label{main:eq:d_overapprox_rectangle}
\tilde{d}(\mathcal{R}, u)  := \beta\sigma(\bar{z}) + L_{\nabla h} l^2(\mathcal{R},u) / 2 + L_g l(R,u).
\end{equation}
Due to our choice of $z, \bar{z}$, we have that $||z(x) - \bar{z}||_2 = ||x-p||_2$ and  we can use \eqref{found:ell:prob:rayleigh_variant} to get
$l(\mathcal{R},u) = r(Q,I_{p}),$ which corresponds to the largest eigenvalue of $Q^{-1}$.
Using \eqref{main:eq:remainder_overapproximation_setinput}, we can now over-approximate the right side of \eqref{main:eq:model:deterministic} for inputs $R$ by an ellipsoid
\begin{equation}
0 \pm \tilde{d}(\mathcal{R},u) \subset \mathcal{E}(0,Q_{\tilde{d}}(\mathcal{R},u)),
\end{equation}
where we obtain $Q_{\tilde{d}}(\mathcal{R},u)$  by over-approximating the hyper-rectangle~$\tilde{d}(\mathcal{R}, u)$ with the ellipsoid~$\mathcal{E}(0,Q_{\tilde{d}}(\mathcal{R},u))$ through $a \pm b \subset \mathcal{E}(a,\sqrt{p}\cdot \mathrm{diag}([b_1,..,b_{p}])), \, \forall a,b \in \R^{p}$. This is illustrated in \cref{main:multi_step:fig:one_step_overapproximation} by the lower ellipsoid in the center.
Combining the previous results, we can compute the final over-approximation using~\eqref{multi_step:ell:minkowski_ellipsoids},
\begin{equation}
\label{main:multi_step:eq:one_step_ahead_set}
\mathcal{R}_+ = \tilde{m}(\mathcal{R},u) = \tilde{f}_\mu(R,u) \oplus  \mathcal{E}(0,Q_{\tilde{d}}(\mathcal{R},u)).
\end{equation}
Since we carefully incorporated all approximation errors and extended the confidence intervals around our model predictions to set-valued inputs, we get the following generalization of the reliability \cref{prob:ass:reliable_statistical_model} for single inputs. 

\begin{lem}
\label{main:multi_step:lemma:one_step}
Let $\delta \in (0,1]$, choose $\beta$ and $T$ as in \cref{prob:ass:reliable_statistical_model} and $D_n = \{(x_{1,n},u_{1,n}),..,(x_{T,n},u_{T,n})\} \subset \mathcal{X} \times \mathcal{U}$ for each $n \in \N$. Then, with probability greater than $1-\delta$, we have that:
\begin{equation}
 x_{k,n} \in \mathcal{R}_{k,n} \Rightarrow f(x_{k,n},u_{k,n}) \in \tilde{m}(\mathcal{R}_{k,n},u_{k,n}),
\end{equation}
uniformly for all $n \in \N, \, 1 \leq k \leq T,\, R_{k,n} = \mathcal{E}(p_{k,n},Q_{k,n}) \subset \mathcal{X}$.
\end{lem}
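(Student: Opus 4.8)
The plan is to split the argument into a single probabilistic step followed by a purely deterministic containment chain, so that the uniformity over $n \in \N$ never costs us a union bound. First I would invoke \cref{prob:ass:reliable_statistical_model} with the finite sample sets $\mathcal{D}_n$ chosen to be the collection of linearization centers paired with the applied controls, i.e.\ $z_k = (p_{k,n}, u_{k,n})$. This produces a single event $E$ of probability at least $1-\delta$ on which $|\mu_{n,j}(p_{k,n},u_{k,n}) - g_j(p_{k,n},u_{k,n})| \leq \beta\, \sigma_{n,j}(p_{k,n},u_{k,n})$ holds simultaneously for all $n \in \N$, all output dimensions $1 \le j \le p$, and all $1 \le k \le T$. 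Because this joint statement is exactly what the assumption delivers, everything that follows can be argued deterministically on $E$, and the claimed uniformity over the infinitely many $n$ is inherited directly rather than paid for by a fresh bound.

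Next, conditioned on $E$, I would fix arbitrary $n$, $k$ and an arbitrary state $x \in \mathcal{R}_{k,n}$, writing $z = (x, u_{k,n})$ and $\bar z = (p_{k,n}, u_{k,n})$; note $\bar z \in \mathcal{X} \times \mathcal{U}$ since $\mathcal{R}_{k,n} \subset \mathcal{X}$. The goal is the pointwise inclusion $f(z) \in \tilde m(\mathcal{R}_{k,n}, u_{k,n})$. I would reassemble the componentwise error bound \cref{main:eq:errorbound_deterministic}: the Lagrangian remainder theorem applied to the linearization of $h$ contributes the quadratic term $\tfrac{L_{\nabla h,j}}{2}\|z-\bar z\|_2^2$, the Lipschitz continuity of $g$ (guaranteed by \cref{gp:lemma:lipschitz}) contributes $L_g\|z-\bar z\|_2$, and the confidence bound from $E$ contributes $\beta\sigma_j(\bar z)$, so by the triangle inequality $f(z)$ lies in the hyper-rectangle centered at $\tilde f_\mu(z)$ with half-widths given by the bracketed term of \cref{main:eq:model:deterministic}.

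The crux of the deterministic part is to make this bound hold uniformly over the whole ellipsoid rather than only at its center. Because $\bar z=(p_{k,n},u_{k,n})$ shares the control component with $z$, we have $\|z-\bar z\|_2 = \|x - p_{k,n}\|_2$, and this is maximized over $x \in \mathcal{R}_{k,n}$ by $l(\mathcal{R}_{k,n},u_{k,n}) = r(Q_{k,n}, I_p)$ via \cref{main:eq:remainder_overapproximation_setinput} and \cref{found:ell:prob:rayleigh_variant}. Monotonicity of the right-hand side of \cref{main:eq:model:deterministic} in $\|z-\bar z\|_2$ then replaces the center-dependent widths by the uniform widths $\tilde d(\mathcal{R}_{k,n},u_{k,n})$ of \cref{main:eq:d_overapprox_rectangle}, so that $f(z) \in \tilde f_\mu(z) \pm \tilde d(\mathcal{R}_{k,n},u_{k,n})$.

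Finally I would assemble the geometry. Decompose $f(z) = \tilde f_\mu(z) + \big(f(z) - \tilde f_\mu(z)\big)$. Since $\tilde f_\mu$ is affine in $x$ with state-Jacobian $A_h$, the affine-transformation identity \cref{multi_step:eq:linear_part:affine_trafo} gives $\tilde f_\mu(z) \in \tilde f_\mu(\mathcal{R}_{k,n},u_{k,n}) = \mathcal{E}(h(\bar z)+\mu(\bar z), A_h Q_{k,n} A_h^\mathrm{T})$, while the remainder lies in the hyper-rectangle $0 \pm \tilde d(\mathcal{R}_{k,n},u_{k,n})$, which is contained in $\mathcal{E}(0,Q_{\tilde d}(\mathcal{R}_{k,n},u_{k,n}))$ by the rectangle-to-ellipsoid over-approximation. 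Their pointwise sum therefore lies in the Minkowski sum of the two ellipsoids, which \cref{multi_step:ell:minkowski_ellipsoids} over-approximates by $\tilde m(\mathcal{R}_{k,n},u_{k,n})$, giving the desired inclusion. Since $x$, $n$ and $k$ were arbitrary, this holds uniformly on $E$, and $\Pr[E] \ge 1-\delta$ completes the proof. I expect the only genuinely delicate point to be this uniform-over-the-ellipsoid argument — justifying that the single worst-case radius $l(\mathcal{R},u)$ simultaneously dominates the linearization and Lipschitz errors at \emph{every} point of the input ellipsoid — since the remaining steps are direct applications of results already established above.
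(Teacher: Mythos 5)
Your proof is correct, and its deterministic core is exactly the over-approximation chain the paper builds in \cref{robust_multistep:one_step:locally_const} (the steps from \cref{main:multi_step:eq:taylor} through \cref{main:multi_step:eq:one_step_ahead_set}), which the paper's own two-line proof simply references as ``due to the over-approximations.'' The one genuine difference is where the probabilistic event is anchored. The paper invokes \cref{gp:lemma:rkhs_confidence_ivals}, i.e.\ the RKHS-based confidence bound that holds uniformly over \emph{all} $z \in \mathcal{X}\times\mathcal{U}$, applies it at the actual query point $(x_{k,n},u_{k,n})$ to get $f(z) \in m(z) = h(z)+\mu_n(z)\pm\beta\sigma_n(z)$, and then asserts $m(z) \subset \tilde m(\mathcal{R}_{k,n},u_{k,n})$. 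You instead invoke \cref{prob:ass:reliable_statistical_model} only at the finite set of linearization centers $\bar z = (p_{k,n},u_{k,n})$ and absorb the displacement $z-\bar z$ through the Lagrangian remainder of $h$ and the Lipschitz constant of $g$. Your route is more faithful to the lemma's stated hypotheses (which only assume the finite-sample guarantee of \cref{prob:ass:reliable_statistical_model}, not the stronger uniform bound of \cref{gp:lemma:rkhs_confidence_ivals}) and it matches how $\tilde d$ in \cref{main:eq:d_overapprox_rectangle} is actually constructed --- the paper's containment $m(z)\subset\tilde m(\mathcal{R},u)$ would, taken literally, require Lipschitz control of $\mu_n$ and $\sigma_n$ rather than of $g$, which is the variant developed only later in \cref{robust_multistep:one_step:locally_linear}. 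The small caveat in your version is that the centers $p_{k,n}$ must be admissible as the pre-selected inputs $\mathcal{D}_n$ of \cref{prob:ass:reliable_statistical_model} (they are in $\mathcal{X}\times\mathcal{U}$ since $\mathcal{R}_{k,n}\subset\mathcal{X}$, though in closed loop they are data-dependent --- a subtlety the paper glosses over as well); otherwise your argument is complete.
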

\begin{proof}
Define $m(x,u) = h(x,u)+\mu_n(x,u) \pm \beta \sigma_n(x,u)$. From \cref{gp:lemma:rkhs_confidence_ivals} we have $\forall \, \mathcal{R}_{k,n} \subset \mathcal{X},\, u_{k,n} \in \mathcal{U}$ with $x_{k,n} \in \mathcal{R}_{k,n}$ that, uniformly with high probability, $f(x_{k,n},u_{k,n}) \in m(x_{k,n},u_{k,n})$. Due to the over-approximations, we have $m(x_{k,n},u_{k,n}) \subset \tilde{m}(\mathcal{R}_{k,n},u_{k,n})$. 
\end{proof}

\cref{main:multi_step:lemma:one_step} allows us to compute confidence ellipsoid around the next state of the system, given that the current state of the system is known to be contained in an ellipsoidal confidence region.

\subsubsection{Predictions via model linearization}
\label{robust_multistep:one_step:locally_linear}
The previous derivations that lead to the reliable ellipsoidal one-step ahead predictions \eqref{main:multi_step:eq:one_step_ahead_set} use Lipschitz arguments on the unknown model-error $g$ to give an outer-bound bound on the linearization errors of our locally linear approximations. A different way to arrive at a reliable outer-bound, that can potentially reduce the conservatism of the proposed uncertainty propagation technique, is by using a first-order Taylor approximation of the statistical model $\mu_n$.
Following previous derivations, we can approximate our unknown model-error with $g(z) \approx \mu_n(\bar{z}) + A_{\mu_n} (x-\bar{x}) + B_{\mu_n} (u - \bar{u}) = P_{\mu_n}^{\bar{z}}(z)$, where $\bar{z} = (\bar{x},\bar{u})^{T}$ and $J_{\mu_n}(\bar{z}) = [A_{\mu_n},B_{\mu_n}]$ is the Jacobian of $\mu_n$ at $\bar{z}$.
Following \eqref{main:multi_step:eq:taylor} - \eqref{main:multi_step:eq:one_step_ahead_set}, we can get an alternative outer bound by making the approximation
\begin{equation}
    f(z) \approx  h(\bar{z}) + J_{h}(\bar{z}) (z-\bar{z}) + P_{\mu_n,j}^{\bar{z}}(z) = \tilde{f}_{P_{\mu_n}}(z).
\end{equation}
By adding and subtracting with $g(z)$ and $\mu(z)$, the approximation error between $f$ and $\tilde{f}_{P_\mu}$ can now be outer-bound via 
\begin{align}
    | f_j(z) -  \tilde{f}_{{P_{\mu_n}},j}(z)| &\leq  \frac{L_{\nabla h,j}}{2} ||z - \bar{z}||_2^2 \nonumber \\ 
    &+  |P_{\mu_n,j}^{\bar{z}}(z) - \mu_{n,j}(z)| + |\mu_{n,j}(z) - g_j(z)|,
\end{align}
for $z \in \mathcal{X}\times\mathcal{U}, 1 \leq j \leq p$.
We use the reliability \cref{prob:ass:reliable_statistical_model} on our statistical model combined with Lipschitz arguments to obtain
\begin{align}
     |P_{\mu_n,j}^{\bar{z}}(z) - \mu_{n,j}(z)| &\leq \frac{L_{\nabla \mu_n,j}}{2} ||z - \bar{z}||_2^2, \\
     |\mu_{n,j}(z) - g_j(z)| &\leq \beta(\sigma_n(\bar{z}) + L_{\sigma_n}||z - \bar{z}||_2),
\end{align}
where the second inequality holds uniformly
with high probability $\forall z \in D_n \subset \mathcal{X}\times\mathcal{U}, 1 \leq j \leq p, n \in \N$

We regard this over-approximation for ellipsoidal inputs $\mathcal{R}=\mathcal{E}(p,Q)$. Closely following the derivations \eqref{multi_step:eq:linear_part:affine_trafo},\eqref{main:eq:d_overapprox_rectangle}, we can decompose our over-approximation into an affine part
\begin{equation}
    \tilde{f}_{P_\mu}(\mathcal{R},u) = \mathcal{E}(h(\bar{z}) + \mu(\bar{z}),(A_h+A_\mu)Q(A_h+A_\mu)^\mathrm{T})
\end{equation}
and combine the Lipschitz approximations into the hyper-rectangle
\begin{equation}
    0 \pm \beta_n(\sigma(\bar{z}) + L_\sigma l(R,u))+ \frac{L_{\nabla h}+L_{\nabla \mu}}{2} l^2(R,u),
\end{equation}
which results in the ellipsoidal over-approximation $\mathcal{E}(0,Q_{\tilde{d}_{P}}(\mathcal{R},u))$. We arrive at the alternative one-step ahead prediction 
\begin{equation}
\label{multi_step:linearization:eq:one_step_ahead_set}
\tilde{m}_{P_\mu}(\mathcal{R},u) = \tilde{f}_{P_\mu}(\mathcal{R},u) \oplus  \mathcal{E}(0,Q_{\tilde{d}_{T}}(\mathcal{R},u)).
\end{equation}
Applying the arguments of \cref{main:multi_step:lemma:one_step}, we can directly see that the function $\tilde{m}_{P_\mu}$ also provides reliable confidence ellipsoids for arbitrary ellipsoidal inputs $R \subset \mathcal{X}$.

Apart from being possibly less conservative in many applications, we note that the one-step ahead predictions \eqref{multi_step:linearization:eq:one_step_ahead_set} only require Lipschitz constants $L_{\nabla\mu}, L_{\nabla h}, L_{\sigma}$ of the \textit{known} functions $\nabla\mu, \nabla h, \sigma$. These could be identified using tools from the global optimization literature~\cite{Wood1996Estimation}.
\subsection{Multi-step ahead predictions}
We use the previous results to compute a sequence of ellipsoids that contain a trajectory of the system with high-probability, by iteratively applying the one-step ahead predictions~\eqref{main:multi_step:eq:one_step_ahead_set}. We note that the following line of arguments equivalently holds for the alternative one-step ahead prediction in~\eqref{multi_step:linearization:eq:one_step_ahead_set}.

Given an initial ellipsoid $\mathcal{R}_0 \subset \R^{p}$ and control input $u_t \in \R^q$, we iteratively compute confidence ellipsoids as
\begin{equation}
\label{multi_step:eq:t_step_delta_reach}
\mathcal{R}_{t+1} = \tilde{m}(\mathcal{R}_t,u_t).
\end{equation}
We can directly apply \cref{main:multi_step:lemma:one_step} to get the following result.
\begin{cor}
\label{multi_step:lemma:t_step_delta_reach}
Let $\delta \in (0,1]$ and choose $\beta$ and $T$ as in \cref{prob:ass:reliable_statistical_model}. For each $n \in \N$ choose $x_{0,n} \in \mathcal{R}_{0,n} \subset \mathcal{X},\, \{u_{0,n},..,u_{T-1,n}\} \subset \mathcal{U}$. Then the following holds jointly for all $n \in \N, 0 \leq t \leq T-1$ with probability at least $1-\delta$:
$x_{t,n} \in R_{t,n}$,
where $(x_{t,n},u_{t,n}) \in \mathcal{X} \times \mathcal{U}$, $\mathcal{R}_{1,n},..,\mathcal{R}_{T,n} \subset \mathcal{X}$ is computed as in \eqref{multi_step:eq:t_step_delta_reach} and $\{x_{0,n},..,x_{T,n}\}$ are trajectories of the system under input sequences $\{u_{0,n},..,u_{T-1,n}\}$ for each $n \in \N$.
 \end{cor}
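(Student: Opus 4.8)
The plan is to reduce the multi-step statement to a purely deterministic induction over the time index $t$, conditioned on the single high-probability event supplied by \cref{main:multi_step:lemma:one_step}. The essential observation is that \cref{prob:ass:reliable_statistical_model}, and hence \cref{main:multi_step:lemma:one_step}, yields \emph{one} event of probability at least $1-\delta$ on which the one-step inclusion $x_{k,n} \in \mathcal{R}_{k,n} \Rightarrow f(x_{k,n},u_{k,n}) \in \tilde{m}(\mathcal{R}_{k,n},u_{k,n})$ holds simultaneously for every $n \in \N$ and every index along the horizon. Because this event is uniform in $n$ and in the time index, no union bound over the horizon or over $n$ is needed, so the confidence level is not degraded as we chain the predictions.

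First I would instantiate \cref{main:multi_step:lemma:one_step} with the sample sets $D_n$ chosen to be the realized trajectory points $\{(x_{0,n},u_{0,n}),\dots,(x_{T-1,n},u_{T-1,n})\}$ obtained under the given control sequence, which is a set of $T$ inputs in $\mathcal{X}\times\mathcal{U}$ as required. This choice is legitimate precisely because \cref{prob:ass:reliable_statistical_model} permits \emph{arbitrary}, in particular data-dependent, selections of the $T$ inputs while retaining the single confidence level $1-\delta$; this is the feature that lets the argument survive the adaptive selection of the trajectory, whose ellipsoids $\mathcal{R}_{t,n}$ depend on the observed data through $\mu_n$ and $\sigma_n$. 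Invoking the lemma also uses the corollary's hypothesis that $\mathcal{R}_{1,n},\dots,\mathcal{R}_{T,n} \subset \mathcal{X}$. Call the resulting event $\mathcal{A}$, so that $\Pr[\mathcal{A}] \geq 1-\delta$.

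Next, working deterministically on $\mathcal{A}$, I would induct on $t$ for each fixed $n$. The base case $x_{0,n} \in \mathcal{R}_{0,n}$ holds by hypothesis. For the inductive step, assuming $x_{t,n} \in \mathcal{R}_{t,n}$, the one-step inclusion guaranteed on $\mathcal{A}$ gives $x_{t+1,n} = f(x_{t,n},u_{t,n}) \in \tilde{m}(\mathcal{R}_{t,n},u_{t,n})$, which equals $\mathcal{R}_{t+1,n}$ by the recursion \eqref{multi_step:eq:t_step_delta_reach}. This closes the induction and establishes $x_{t,n} \in \mathcal{R}_{t,n}$ for all $0 \leq t \leq T-1$ and all $n$ on $\mathcal{A}$, hence jointly with probability at least $1-\delta$.

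I expect the only genuinely delicate point to be the bookkeeping of the probabilistic quantifiers: one must assert the high-probability statement \emph{once}, for the event $\mathcal{A}$, and treat everything afterwards as a deterministic implication valid pointwise on $\mathcal{A}$. A naive reading that re-applies a fresh $1-\delta$ bound at each step would incur a spurious horizon-dependent union bound; the whole point is that the uniformity built into \cref{prob:ass:reliable_statistical_model}, and inherited by \cref{main:multi_step:lemma:one_step}, already covers all time steps and all $n$ at once, so the chaining is free. The ellipsoidal recursion itself contributes nothing probabilistic, being merely the deterministic map whose inclusion property we propagate.
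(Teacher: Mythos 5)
Your proposal is correct and follows essentially the same route as the paper: instantiate \cref{main:multi_step:lemma:one_step} with $\mathcal{D}_n$ equal to the realized trajectory points, then chain the one-step inclusions deterministically on the single uniform high-probability event, with no union bound over $t$ or $n$. Your write-up is in fact more careful than the paper's one-line argument in making the induction and the quantifier bookkeeping explicit.
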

 \begin{proof}
 Since \cref{main:multi_step:lemma:one_step} holds for any set $\mathcal{D}_n, n \in \N$, we can choose $\mathcal{D}_n = \{(x_{0,n},u_{0,n}),..,(x_{T-1,n},u_{T-1,n})\}$ and so that starting from tarting in $x_{0,n} \in \mathcal{R}_{0,n}.$  we get with high probability that $x_{i,n} \in \mathcal{R}_{i,n} = \tilde{m}(\mathcal{R}_{i-1,n},u_{i-1,n})$ for all $i=1,..T$ with $n \in \N$.
\end{proof}

\cref{multi_step:lemma:t_step_delta_reach} guarantees that, with high probability, the system is always contained in the propagated ellipsoids \eqref{multi_step:eq:t_step_delta_reach}. However, this only holds if all confidence ellipsoids are inside the feasible state region $\mathcal{X}$. Thus, if we provide safety guarantees for these sequences of ellipsoids, we obtain high-probability safety guarantees for the system~\eqref{ps:eq:system}.
\subsection{Predictions under state-feedback control laws}
\label{safempc:affine_feedback}
When applying multi-step ahead predictions under a sequence of feed-forward inputs $u_t \in \mathcal{U}$, the individual sets of the corresponding reachability sequence can quickly grow unreasonably large. This is because these \textit{open loop} input sequences do not account for future control inputs that could correct deviations from the model predictions.
Hence, we extend \eqref{main:multi_step:eq:one_step_ahead_set} to
\textit{affine state-feedback control laws} of the form
\begin{equation}
\label{main:multi_sep:eq:feedback_ctrl}
u_{K,t}(x_t) := K_t(x_t-p_t) + k_t,
\end{equation}
where $K_t \in \R^{q \times p}$ is a feedback matrix and $k_t \in \R^{q}$ is the open-loop input. The parameter $p_t$ is determined through the center of the current ellipsoid $\mathcal{R}_t = \mathcal{E}(p_t,Q_t)$. Given an appropriate choice of $K_t$, the control law actively contracts the ellipsoids towards their center. This technique is commonly used in \textit{tube-based MPC}, to reduce the size of \textit{tubes} around a nominal trajectory of the system that incorporate uncertainties and disturbances \cite{rawlings2009model}.
Similar to the derivations \eqref{main:multi_step:eq:taylor}-\eqref{main:multi_step:eq:one_step_ahead_set}, we can compute the function $\tilde{m}$ for affine feedback controllers \eqref{main:multi_sep:eq:feedback_ctrl} and ellipsoids $R_t =\mathcal{E}(p_t,Q_t)$.
The resulting ellipsoid is
\begin{equation}
\label{main:multi_step:eq:over_approx_state_feedback}
\tilde{m}(\mathcal{R}_t,u_{K,t}) =  \mathcal{E}(h(\bar{z}_t)+\mu(\bar{z}_t),H_tQ_tH_t^\mathrm{T}) \oplus \mathcal{E}(0,Q_{\tilde{d}}(\mathcal{R}_t,u_{K,t})),
\end{equation}
where $\bar{z_t} = (p_t,k_t)^\mathrm{T}$ and $H_t = A_h + B_h K_t$. The set $\mathcal{E}(0,Q_{\tilde{d}}(\mathcal{R}_t,u_{K,t}))$ is obtained similarly to \eqref{main:eq:remainder_overapproximation_setinput} as the ellipsoidal over-approximation of
\begin{equation}
0 \pm [\beta_n \sigma(\bar{z}) + L_{\nabla h} \frac{l^2(\mathcal{R}_t,S_t)}{2}  + L_gl(\mathcal{R}_t,S_t)],
\end{equation}
with $S_t = [I_{n_x},K_t^\mathrm{T}]$ and $l(\mathcal{R}_t,S_t) = \max_{x \in \mathcal{R}_t}||S_t(z(x)-\bar{z_t})||_2 \,, z(x) = (x,u_{K_t}(x))^\mathrm{T}$. The theoretical results of \cref{main:multi_step:lemma:one_step} and \cref{multi_step:lemma:t_step_delta_reach} directly apply to the case of the uncertainty propagation technique \eqref{main:multi_step:eq:over_approx_state_feedback}. For the remainder of this paper, we assume $K_t$ is pre-specified, while $k_t$ is assumed to be a decision variable. For the sake of generality, we drop the subscript $K$ and the functional dependency on $x$ in $u_{K,t}(x)$ unless required and refer to \eqref{main:multi_sep:eq:feedback_ctrl} when writing $u_t$.

\subsection{Safety constraints}
The derived multi-step ahead prediction technique provides a sequence of ellipsoidal confidence regions around trajectories of the true system $f$ through \eqref{multi_step:eq:t_step_delta_reach}. We can guarantee that the system is safe by verifying that the computed confidence ellipsoids are contained inside the polytopic constraints \eqref{ps:ass:state_constraits} and \eqref{ps:ass:ctrl_constraits}.
That is, given a sequence of feedback controllers $u_{K,t}, \, t = {0,..,T-1}$ we need to verify
\begin{equation}
\label{multi_step:eq:constr:finite_horizon_safety}
\mathcal{R}_{t+1} \subset \mathcal{X}, \, u_t(\mathcal{R}_t) \subset \mathcal{U}, \, t=0,..,T-1,
\end{equation}
where $(R_0,..,R_T)$ is given through \eqref{multi_step:eq:t_step_delta_reach} and $u_t(R_t) := \{u_{K,t}(x) | x \in R_t\} $. Since our constraints are polytopes, we have that
$\mathcal{X} = \bigcap_{i=1}^{m_x} \mathcal{X}_i$,
$\mathcal{X}_i = \{x \in \R^{p} |  [H_x]_{i,\cdot}  x - h_i^x\leq 0 \},$
where $[H_x]_{i,\cdot}$ is the $i$th row of $H^x$.
We can now formulate the state constraints through the condition $\mathcal{R}_t = \mathcal{E}(p_t,Q_t) \subset \mathcal{X}$ as $m_x$ individual constraints $R_t \subset \mathcal{X}_i,\, i= 1,..,m_x$, for which an analytical formulation exists~\cite{Hessem2002Closedloop},
\begin{equation}
\label{multi_step:ell:ell_polytope_state_constr}
[H_x]_{i,\cdot}p_t + \sqrt{[H_x]_{i,\cdot}Q_t[H_x]_{i,\cdot}^T}   \leq h^x_i, \,
\end{equation}
 $\forall i \in \{1,..,m_x\}$. Moreover, we can use the fact that $u_t$ is affine  in~$x$ to obtain $u_t(\mathcal{R}_t) = \mathcal{E}(k_t,K_tQ_t,K_t^T)$, using \eqref{multi_step:ell:affine_trafo}. The corresponding control constraint $u_t(\mathcal{R}_t) \subset \mathcal{U}$ is then equivalently given by
\begin{equation}
\label{multi_step:ell:ell_polytope_control_constr}
[H_u]_{i,\cdot}k_t + \sqrt{[H_u]_{i,\cdot}K_tQ_tK_t^\mathrm{T} [H_u]_{i,\cdot}^\mathrm{T}} \leq h^u_i, \, 
\end{equation}
$\forall i \in \{1,..,m_u\}$. This provides us with a closed-form expression of our safety constraints \eqref{multi_step:eq:constr:finite_horizon_safety} that deterministically guarantees the safety of our system over an arbitrary finite horizon $T$, \textit{given that} the system is contained in the sequence of ellipsoids $\mathcal{R}_t, t=0,..,T$. Hence, these constraints are as reliable as our multi-step ahead prediction technique and, consequently, as reliable as our statistical model.


\section{Safe Model Predictive Control}
\label{safempc}
Based on the previous results, we formulate a MPC scheme that is guaranteed to satisfy the safety condition in \cref{ps:def:epsilon_safety}:
\begin{subequations}
\label{main:safempc:mpc:mpc_problem}
\begin{align}
&\underset{u_0,..,u_{T-1}}{\text{minimize}}
& & J_t(\mathcal{R}_0,..,\mathcal{R}_T) \\
\label{main:safempc:mpc:mpc_problem:a}
& \text{subject to }
 & & \mathcal{R}_{t+1} = \tilde{m}(\mathcal{R}_t,u_t), \, t=0,..,T-1   \\
& & & \mathcal{R}_t \subset \mathcal{X}, \, t = 1,..,T-1 \\
& & & u_t(\mathcal{R}_t) \subset \mathcal{U}, t = 0,..,T-1 \\
\label{main:safempc:mpc:mpc_problem:d}
& & & \mathcal{R}_T \subset \mathcal{X}_{\mathrm{safe}},
\end{align}
\end{subequations}
where $R_0 := \{x_t\}$ is the current state of the system and the intermediate state and control constraints are defined in \eqref{multi_step:ell:ell_polytope_state_constr}and \eqref{multi_step:ell:ell_polytope_control_constr}, respectively. The terminal set constraint $\mathcal{R}_T \subset \mathcal{X}_{\mathrm{safe}}$ has the same form as \eqref{multi_step:ell:ell_polytope_state_constr} and can be formulated accordingly. For now, we assume an arbitrary objective function and discuss how to choose $J_t$ to solve a RL task in \cref{safe_rl}.

Due to the terminal constraint $\mathcal{R}_T \subset \mathcal{X}_{\mathrm{safe}}$, a solution to  \eqref{main:safempc:mpc:mpc_problem} provides a sequence of feedback controllers $u_0,..,u_T$ that steers the system back to the safe set $\mathcal{X}_{\mathrm{safe}}$.
We cannot directly show that a solution to MPC problem \eqref{main:safempc:mpc:mpc_problem} exists at every time step (this property is known as recursive feasibility) without imposing additional assumptions. This is mainly due to the fact that we cannot (not even with high probability) guarantee that, after transitioning to the next time step and shifting our horizon $t \leftarrow t + 1$, and hence, $\mathcal{R}_{T-1} \leftarrow \mathcal{R}_T$, there exists a new state feedback controller $u_{T-1}$ such that $\tilde{m}(R_{T-1},u_{T-1}) = \mathcal{R}_T \subset \mathcal{X}_{\mathrm{safe}}, u_{T-1}(\mathcal{R}_{T-1}) \subset \mathcal{U}$. While it may be possible to enforce this in general, we would have to, among other difficulties, carefully deal with the nonlinearity and non-convexity of our MPC problem, as e.g. in \cite{Simon2013Nonlinear}, and the fact that our terminal set is not necessarily RCPI, which is a pre-requisite in many robust MPC approaches \cite{rawlings2009model}. \newline
However, employing a control scheme similar to standard robust MPC, we guarantee that such a sequence of feedback controllers exists at every time step as follows: Given a feasible solution $ \Pi_t = ( u_t^0,.., u_{t}^{T-1} )$ to \eqref{main:safempc:mpc:mpc_problem} at time $t$, we apply the first control $u_t^0$. In case we do not find a feasible solution to \eqref{main:safempc:mpc:mpc_problem} at the next time step, we shift the previous solution in a receding horizon fashion and append $\pi_{\mathrm{safe}}$ to the sequence to obtain $\Pi_{t+1} = ( u_t^1,.., u_{t}^{T-1}, \pi_{\mathrm{safe}} )$. We repeat this process until a new feasible solution exists that replaces the previous input sequence. This procedure is summarized in \cref{main:safempc:algo:safety_algorithm}.
We now state the main result of the paper that guarantees the safety of our system under the proposed algorithm.
\begin{thm}
\label{main:safempc:theorem:safe_algorithm}
Let $\pi$ be the controller defined through \cref{main:safempc:algo:safety_algorithm}, $x_0 \in \mathcal{X}_{\mathrm{safe}}$ and let the planning horizon $T$ be chosen as in \cref{prob:ass:reliable_statistical_model}.
Then the system \eqref{ps:eq:system} is $\delta-$safe under the controller $\pi$.
\end{thm}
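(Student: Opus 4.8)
The plan is to condition on the single high-probability event under which the statistical model is reliable, and then argue safety \emph{deterministically} on that event by maintaining the invariant that a feasible return maneuver to $\mathcal{X}_{\mathrm{safe}}$ always exists. Let $\mathcal{G}$ be the event that the multi-step predictions of \cref{multi_step:lemma:t_step_delta_reach} contain the true trajectory jointly for every planning instance and every horizon step. Because \cref{multi_step:lemma:t_step_delta_reach} (via \cref{prob:ass:reliable_statistical_model}) holds \emph{uniformly over all} $n \in \N$ with probability at least $1-\delta$, we get $\Pr[\mathcal{G}] \geq 1-\delta$, so it suffices to show that on $\mathcal{G}$ the closed loop satisfies $f_\pi(x_t) \in \mathcal{X}$ and $\pi(x_t) \in \mathcal{U}$ for all $t \in \N$; the statement then follows from \cref{ps:def:epsilon_safety}.

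First I would settle the base case. If no feasible solution to \eqref{main:safempc:mpc:mpc_problem} exists at $t=0$, the algorithm applies $\pi_{\mathrm{safe}}$, and since $x_0 \in \mathcal{X}_{\mathrm{safe}}$, \cref{ps:ass:safe_set} already guarantees $f_{\pi_{\mathrm{safe}}}(x_t) \in \mathcal{X}$ for all $t$. Otherwise a feasible $\Pi_0 = (u_0^0,\dots,u_0^{T-1})$ exists, inducing ellipsoids $\mathcal{R}_0^0 = \{x_0\}, \mathcal{R}_1^0,\dots,\mathcal{R}_T^0$ with $\mathcal{R}_t^0 \subset \mathcal{X}$, $u_0^t(\mathcal{R}_t^0) \subset \mathcal{U}$, and $\mathcal{R}_T^0 \subset \mathcal{X}_{\mathrm{safe}}$. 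I would then carry the inductive invariant that at every time step the controller acts as the first input of a \emph{safe maneuver}: a feasible finite MPC input sequence whose induced ellipsoids lie in $\mathcal{X}$, reach $\mathcal{X}_{\mathrm{safe}}$ at their terminal step, and are followed by $\pi_{\mathrm{safe}}$ indefinitely.

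The heart of the argument is the maintenance step, which I would carry out \emph{without} recomputing any ellipsoids. Suppose at time $t$ the current safe maneuver has feasible ellipsoids $\mathcal{R}_0^t,\dots,\mathcal{R}_T^t$ with $x_t \in \mathcal{R}_0^t$. Applying the first controller gives $\pi(x_t) = u_t^0(x_t) \in u_t^0(\mathcal{R}_0^t) \subset \mathcal{U}$, and on $\mathcal{G}$ the one-step reliability of \cref{main:multi_step:lemma:one_step} yields $x_{t+1} = f_\pi(x_t) \in \tilde m(\mathcal{R}_0^t, u_t^0) = \mathcal{R}_1^t \subset \mathcal{X}$. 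If a new feasible solution is found at $t+1$, it is itself a safe maneuver from $x_{t+1}$ and the invariant persists. If not, the algorithm shifts to $(u_t^1,\dots,u_t^{T-1},\pi_{\mathrm{safe}})$, and I would show this is again a safe maneuver from $x_{t+1}$ by reusing the \emph{same} ellipsoids $\mathcal{R}_1^t,\dots,\mathcal{R}_T^t$: since $x_{t+1} \in \mathcal{R}_1^t$, applying \cref{multi_step:lemma:t_step_delta_reach} from the initial ellipsoid $\mathcal{R}_1^t$ with the inputs $u_t^1,\dots,u_t^{T-1}$ guarantees on $\mathcal{G}$ that the realized states remain in $\mathcal{R}_2^t,\dots,\mathcal{R}_T^t$ and the applied controls remain in $\mathcal{U}$, so all constraints hold until the system reaches $\mathcal{R}_T^t \subset \mathcal{X}_{\mathrm{safe}}$; from there \cref{ps:ass:safe_set} keeps the true system in $\mathcal{X}$ forever under $\pi_{\mathrm{safe}}$. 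Iterating this invariant, on $\mathcal{G}$ every applied input lies in $\mathcal{U}$ and every visited state lies in $\mathcal{X}$, which is exactly $\delta$-safety.

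I expect the maintenance step to be the main obstacle, and the crucial design decision is to \emph{avoid} proving recursive feasibility in the usual sense, which would require $\mathcal{X}_{\mathrm{safe}}$ to be robust control positive invariant under the ellipsoidal map $\tilde m(\cdot,\pi_{\mathrm{safe}})$, a property the relaxed \cref{ps:ass:safe_set} does not supply. Instead I would keep the ellipsoids of the last feasible plan fixed and invoke the multi-step containment of \cref{multi_step:lemma:t_step_delta_reach} from the intermediate ellipsoid $\mathcal{R}_1^t$, which sidesteps any monotonicity argument for $\tilde m$ in its set-valued argument, and delegate the entire $\pi_{\mathrm{safe}}$ tail to the real-system invariance of \cref{ps:ass:safe_set} rather than to ellipsoidal over-approximation. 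The remaining care is to ensure the single event $\mathcal{G}$ simultaneously covers all planning instances, which is precisely why the uniform-over-$n$ form of \cref{prob:ass:reliable_statistical_model,multi_step:lemma:t_step_delta_reach} is essential.
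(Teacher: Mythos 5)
Your proposal is correct and follows essentially the same route as the paper's proof: condition on the uniform high-probability event supplied by \cref{multi_step:lemma:t_step_delta_reach}, then induct over time, using the terminal constraint $\mathcal{R}_T \subset \mathcal{X}_{\mathrm{safe}}$ together with the shifted sequence $(u_t^1,\dots,u_t^{T-1},\pi_{\mathrm{safe}})$ and \cref{ps:ass:safe_set} to handle infeasibility. Your write-up is merely more explicit than the paper's about reusing the previously computed ellipsoids in the maintenance step and about why the uniformity over $n$ avoids a union bound over time, but these points are implicit in the paper's argument rather than a different approach.
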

\begin{proof}
From \cref{multi_step:lemma:t_step_delta_reach}, the ellipsoidal outer approximations (and by design of the MPC problem, also the constraints \eqref{ps:ass:state_constraits}) hold uniformly with high probability for \textit{all} closed-loop systems $f_\Pi$, where $\Pi$ is a feasible solution to \eqref{main:safempc:mpc:mpc_problem}, over the corresponding time horizon $T$ given in \cref{prob:ass:reliable_statistical_model}.
Hence we can show $\delta$-safety by induction.
Base case: If \eqref{main:safempc:mpc:mpc_problem} is infeasible, we are $\delta$-safe using the backup controller $\pi_{\mathrm{safe}}$ of \cref{ps:ass:safe_set}, since $x_0 \in \mathcal{X}_\mathrm{safe}$. Otherwise the controller returned from \eqref{main:safempc:mpc:mpc_problem} is $\delta$-safe as a consequence of \cref{multi_step:lemma:t_step_delta_reach} and the terminal set constraint that leads to $x_{t+T} \in \mathcal{X}_\mathrm{safe}$.
Induction step: let the previous controller be $\delta$-safe. At time step $t+1$, if \eqref{main:safempc:mpc:mpc_problem} is infeasible then $\Pi_t$ leads to a state $x_{t+T} \in \mathcal{X}_\mathrm{safe}$, from which the backup-controller is $\delta$-safe by \cref{ps:ass:safe_set}. If \eqref{main:safempc:mpc:mpc_problem} is feasible, then the return path is $\delta$-safe by \cref{multi_step:lemma:t_step_delta_reach}.
\end{proof}

\begin{algorithm}[t]
\caption{\textsc{SafeMPC}}
\label{main:safempc:algo:safety_algorithm}
\begin{algorithmic}[1]
\STATE \textbf{Input:} Safe policy $\pi_\mathrm{safe}$, dynamics model $h$, statistical model $(\mu_0,\Sigma_0)$. 
\STATE $\Pi_0 \gets \{\pi_{\mathrm{safe}},..,\pi_{\mathrm{safe}}\}$ with $|\Pi_0| = T$
\FOR{$t=0,1,..$}
	\STATE $J_t \gets$ objective from high-level planner
	\STATE feasible, $\Pi \gets$ solve MPC problem \eqref{main:safempc:mpc:mpc_problem}
    \IF{feasible} 
    \STATE $\Pi_t \gets \Pi$
    \ELSE
    \STATE $\Pi_t \gets \left(\Pi_{t-1,1:T-1},\pi_{\mathrm{safe}}  \right)$ 
    \ENDIF
    \STATE $x_{t+1} \gets$ apply $u_t = \Pi_{t,0}(x_t)$ to the system \eqref{ps:eq:system}
    \STATE $(\mu_{t+1}, \Sigma_{t+1}) \gets$ update statistical model with noisy  transition $(x_t, u_t, \tilde{f}(x_t,u_t)).$ 
\ENDFOR
\end{algorithmic}%
\end{algorithm}

\begin{rem}
In \cref{main:safempc:algo:safety_algorithm}, we implicitly make the assumption that we have access to the true state of the system during planning, while updating the model with a noisy observation of the state. This is a standard assumption when acting in a Markov decision process (MDPs). The theoretical guarantees of \cref{main:safempc:algo:safety_algorithm} can be extended to noisy observations of the state by either assuming that the true state is contained in a compact set $B_0$ and setting $R_0 \supset B_0 $, or by jointly bounding the probability of the state to be contained in an ellipsoid around the current observation for all time steps.
\end{rem}

\section{MPC-based Safe Reinforcement Learning}
\label{safe_rl}
A sequence of ellipsoids computed with the proposed uncertainty propagation technique contains a trajectory of the system with high probability. While we can, by using this technique in combination with \cref{main:safempc:algo:safety_algorithm},  guarantee the safety of our system, we are ultimately interested in safely identifying our system through exploration and to solve a task in a safe reinforcement learning setting. To this end, we have to find answers to the following questions: Firstly, how can we best approximate and minimize the long-term cost \eqref{eq:back:model_based_rl:exp_cost} of our model predictive controller given our current knowledge of our system? And secondly, how can we gather \emph{task-relevant} observations from our system that help us in solving the given RL task?

\subsection{Safety and Performance}

We assume that we are given an analytic, twice differentiable cost-function $c(x_t,u_t)$, e.g. the squared distance to a desired set-point, for which we try to optimize the accumulated long term cost defined in equation~\eqref{eq:back:model_based_rl:exp_cost}. 
We could directly apply \cref{main:safempc:algo:safety_algorithm} to solve this RL task, e.g., by choosing the objective function $J_t = \sum_{t=0}^{T-1} \gamma^t c(p_t,k_t)$ for the MPC problem \eqref{main:safempc:mpc:mpc_problem}, where $p_t$ is the center of the ellipsoid $R_t =  \mathcal{E}(p_t,Q_t)$ in every step of the uncertainty propagation.

The issue with this approach is two-fold. Firstly, the proposed uncertainty propagation technique is designed to robustly outer-approximate the reachability of our system in order to guarantee safety. This stands in contrast to many of the techniques proposed in the literature which provide a \textit{stochastic} estimate of the state of the system in terms of probability distributions using our statistical model \cite{Deisenroth2011PILCO, Boedecker2014Approximate}. These estimation techniques give rise to more powerful stochastic planning approaches.
Secondly, we argue that a safety maneuver acts on a different (typically smaller) time scale than is necessary to estimate the long-term cost \eqref{eq:back:model_based_rl:exp_cost} of applying an action in the current time step.
Imagine a race car driving on a track. Estimating the impact of a steering maneuver before a turn on the velocity of the car after the turn typically requires a much longer look-ahead horizon than planning an emergency braking maneuver that decelerates the car towards a modest velocity.  
\subsection{Performance trajectory planning}
To avoid this issue, we propose to plan a probabilistic \textit{performance trajectory} $X_0,..,X_H$ under a sequence of inputs $u^{\textrm{perf}}_{0},..,u^{\textrm{perf}}_{H-1}$ using a stochastic performance-model $m_{\mathrm{perf}}$. The goal of the performance trajectory is to minimize an approximation of the cumulative cost \eqref{eq:back:model_based_rl:exp_cost},
\begin{align}
\label{mpc_rl:eq:approx_cum_cost}
    V^{u^{\textrm{perf}}_{0},..,u^{\textrm{perf}}_{H-1}}(x_0) &= \mathbb{E}[\sum_{t=0}^{H-1} \gamma^t c(X_t,u^{\textrm{perf}}_{t}(X_t)) | X_0 = x_0], \nonumber \\
    &\approx \sum_{t=0}^{\infty} \gamma^t c(x_t,u^{\textrm{perf}}_{t}(x_t)), 
\end{align}
where $x_{t+1} = f(x_t,u_t) $ is a trajectory of the real, \textit{deterministic} system \eqref{ps:eq:system} and $X_{t+1} \sim m_{\mathrm{perf}}(X_t,u^{\textrm{perf}}_{t}(X_t)), t=0,..,H-1$ is a sequence of random variables\footnote{While in our case we assume the system to be deterministic and our performance model to be stochastic, we can treat a deterministic performance model as a special case thereof.} that represents our incomplete, \textit{stochastic belief} about the future states of our system.
\subsection{Maintaining safety through simultaneous planning}
In order to maintain the safety of our system, we need to combine the performance trajectory with our provably safe MPC \cref{main:safempc:algo:safety_algorithm}. A simple solution would be given by the following two-stage process at every time-step:
\begin{enumerate}
    \item Find a sequence of performance controls $\hat{u}^{\textrm{perf}}_{0},..,\hat{u}^{\textrm{perf}} = \arg\min_{u^{\textrm{perf}}_{0},..,u^{\textrm{perf}}_{H-1}} \,  V^{u^{\textrm{perf}}_{0},..,u^{\textrm{perf}}_{H-1}}(x_0)$
    \item Solve the MPC problem \eqref{main:safempc:mpc:mpc_problem} with objective function $J_t = \sum_{i = 0}^{\min(H,T)} \textrm{dist}(u_i,\hat{u}^{\textrm{perf}}_{i})$,
\end{enumerate}
where $\textrm{dist}(\cdot,\cdot)$ is an appropriate distance function (e.g. the squared $l_2$ distance). By applying this scheme, e.g. similar to the one proposed in \cite{Wabersich2018Safe}, we could easily maintain the safety of our system. However, proximity of two actions $u_i,\hat{u}^{\textrm{perf}}_{i}$ in the action space does not necessarily imply similar performance, as we show in \cref{mpc_rl:example:failure_case}.
\begin{exmp}
\label{mpc_rl:example:failure_case}
Consider a simple system $f(x_t,u_t) = x_t + u_t, u_t \in \{-1,0,1\}, x \in \Z$ in a discrete state-action space without model error, i.e. $g \equiv 0$. Let the cost-function be given by $c(x) = 0, \, \forall x \in \Z\setminus\{-1,1\},\, c(-1) = -2,\,  c(1) = - 1$. We regard the case where our state constraints and safe set are given by $\mathcal{X} = \mathcal{X}_{\textrm{safe}} = \N$. Starting in $x_0 = 0$ and using planning horizons $T=H\geq1$, the optimal action in terms of performance would lead to the unsafe but low-cost state $x_1 = -1$, i.e. $u_0^{\textrm{perf}} = -1$. The closest safe action we can take is to apply $u_0 = 0$, resulting in the system being stuck in  $x_t = 0 \, \forall t \in \N$, although $u_0 = 1$ would directly lead to the best possible safe state.  
\end{exmp}
While the given example is highly simplified, it illustrates the fundamental issues that arise when a safety mechanism without an understanding of the task is combined with an RL agent that is ignoring the constraints. 
To approach this problem, we propose to \textit{simultaneously} plan the performance and safety trajectory. We maintain the safety of our system by coupling both trajectories, enforcing that $u^{\textrm{perf}}_{0} = u^{\textrm{safe}}_{0}$. That is, we apply the best possible action that is able to return the system to a safe state by means of a feasible safety trajectory.
This extended optimization problem is given by
\begin{equation}
\label{main:safempc:perf:mpc_problem}
\begin{aligned}
&\underset{\substack{u_0,..,u_{T-1} \\ u^{\mathrm{perf}}_{0},..,u^{\mathrm{perf}}_{H-1}}}{\text{minimize}}
& & V^{u^{\textrm{perf}}_{0},..,u^{\textrm{perf}}_{H-1}}(x_0) \\
& \text{subject to }
 & & \eqref{main:safempc:mpc:mpc_problem:a}-\eqref{main:safempc:mpc:mpc_problem:d},\,  t = 0,..,T-1 \\
& & & X_{t+1}  = m_{\mathrm{perf}}(X_t,u^{\mathrm{perf}}_{t}), t = 0,..,H-1 \\
& & & u_t = u^{\mathrm{perf}}_{t}, \, t = 0,..,r-1,
\end{aligned}
\end{equation}
with $r \geq 1$, possibly allowing even more than the first planned input to be identical.

\subsection{Exploration versus Exploitation}
A solution to the MPC problem \eqref{main:safempc:perf:mpc_problem} provides us with a control input that minimizes an \textit{estimate} \eqref{mpc_rl:eq:approx_cum_cost} of the long-term consequences of applying this input to the real system. In order to improve this estimate, we need a better approximation of our system based on observations that are used to update our statistical model. However, by just following a control policy that maximizes this estimated objective, we may never obtain crucial task-relevant information about our system in previously unseen regions of the state space. Hence, finding a good exploration strategy is crucial for many real-world RL applications.

Since our statistical model \textit{knows what it does not know} through the predictive variance $\Sigma_n$, we get a good estimate on which regions of the state space need to be explored. Consequently, a wide range of trajectory-based RL approaches and their corresponding exploration techniques attempt to find an efficient trade-off between exploration and exploitation by combining an estimate of the expected cost \eqref{mpc_rl:eq:approx_cum_cost} and the confidence intervals in the objective function, e.g.~\cite{Boedecker2014Approximate,Xie2016Modelbased,Kamthe2017DataEfficient}. We can incorporate any of these exploration techniques in the objective function of our extended MPC problem~\eqref{main:safempc:perf:mpc_problem}.

\begin{algorithm}[t]
\caption{Episodic Safe RL}
\label{mpc_rl:algo:safe_rl_episode}
\begin{algorithmic}[1]
\STATE\textbf{Input:} Safe policy $\pi_\mathrm{safe}$, dynamics model $h$, statistical model $(\mu_0, \Sigma_0)$
\FOR{$i=0,1,...n_{ep}$}
    \STATE $x_0 \leftarrow$  Reset system to initial state inside $\mathcal{X}_{\mathrm{safe}}$
	\STATE $\mathcal{Z}_i, y_i \leftarrow $ Apply \cref{main:safempc:algo:safety_algorithm} with MPC problem \eqref{main:safempc:perf:mpc_problem}  and objective \eqref{mpc_rl:sat_cost} for $n_{steps}$ time steps without updating the statistical model.
	\STATE $(\mu,\Sigma) \leftarrow$ update $(\mu,\Sigma)$ with observations $(\mathcal{Z}_i, y_i)$
\ENDFOR
\end{algorithmic}%
\end{algorithm}


\section{Practical Considerations}
\label{practice}
\cref{main:safempc:algo:safety_algorithm} theoretically guarantees that the system remains safe, while actively optimizing for performance via the MPC problem \eqref{main:safempc:perf:mpc_problem}. This problem can be solved by commonly used, nonlinear programming (NLP) solvers, such as the \textit{Interior Point OPTimizer (Ipopt, \cite{Wachter2006Implementation})}.
We consider possible design choices that could improve the performance in a practical application.

\subsection{Optimizing over affine feedback policies}

In practice, the affine feedback control structure introduced in \cref{safempc:affine_feedback} improves performance significantly. However, optimizing over $K_t$ in \eqref{main:multi_sep:eq:feedback_ctrl} seems to be challenging, both in terms of numerical stability and computational complexity. Hence, we pre-specify the feedback terms in all of our experiments and only optimize over the feed-forward terms.

\subsection{Lipschitz constants and eigenvalue computations}

In our multi-step ahead predictions \eqref{main:multi_step:eq:one_step_ahead_set}, we need to solve a generalized eigenvalue problem for every step in the planning horizon. We use the \textit{inverse power iteration}, an iterative method that asymptotically converges to the largest generalized eigenvalue of a pair of matrices \cite{Golub2012Matrix}. We run the algorithm for a fixed number of $p^2$ iterations to solve these intermediate eigenvalue problems. In practice, this seems to result in sufficiently accurate estimations.  

\begin{rem}
Due to the generalized eigenvalue problem, the uncertainty propagation \eqref{main:multi_step:eq:one_step_ahead_set} is not analytic.
However, we can still obtain exact derivative information by means of algorithmic differentiation, that is provided in many state-of-the-art optimization software libraries~\cite{Andersson2013b}.
\end{rem}


\section{Experiments}
\label{exp}

In this section, we evaluate the proposed safe MPC \cref{main:safempc:algo:safety_algorithm} to safely explore the dynamics of an inverted pendulum system and the corresponding episode-based RL \cref{mpc_rl:algo:safe_rl_episode} to solve a task in a cart-pole system with safety constraints. We provide the code to run all experiments detailed in this section on Github \footnote{\url{https://github.com/befelix/safe-exploration}.}

For our experiments we consider an episodic setting, where we interact with the system for $n_{steps}$ time steps, e.g. the time it ideally takes for an autonomous race car to finish a lap, and then use the observations gathered from the system after each rollout to update our statistical model. This procedure is repeated $n_{ep}$ times. We can directly replace \eqref{main:safempc:mpc:mpc_problem} with our new MPC problem \eqref{main:safempc:perf:mpc_problem} in \cref{main:safempc:algo:safety_algorithm} and apply it to our system during each rollout. This procedure is depicted in \cref{mpc_rl:algo:safe_rl_episode}.

The difference between the episodic setting and the MPC \cref{main:safempc:algo:safety_algorithm} lies mainly in the choice of the objective. However, a subtle but theoretically important difference is that we only update our model every $n_{steps}$ time steps. Hence, we technically require that the scaling factor $\beta$ for our statistical model holds for sample sizes $T \cdot n_{steps}$ in \cref{prob:ass:reliable_statistical_model}. Then, the safety guarantees of \cref{main:safempc:theorem:safe_algorithm} are maintained throughout our learning process in \cref{mpc_rl:algo:safe_rl_episode} by noting that feasibility of the problem \eqref{main:safempc:perf:mpc_problem} still guarantees existence of a safe return strategy and we can still use the same fall back strategy as before in case of infeasibility. We note that, as long as we account for a possible delay in updating our statistical model, our approach is not limited to episodic RL tasks only.

We use a GP with a mixture of linear and Mat\'ern kernels for both systems. For the inverted pendulum system, we initially train our model with a dataset $(\mathcal{Z}_0,\tilde{y}_0)$ sampled inside the safe set using the backup controller $\pi_{\mathrm{Safe}}$. That is, we gather $n_0 = 25$ initial samples $\mathcal{Z}_0 = \{ z_1^0,..,z^0_{n_0} \}$ with $z_i^0 = (x_i, \pi_{\mathrm{safe}}(x_i)), x_i \in \mathcal{X}_{\mathrm{safe}}, \, i = 1,..,n$ and observed next states $ \hat{f}_0 =\{\hat{f}^1_0,..,\hat{f}^{n_0}_0\}$. In the cart-pole RL task no prior observations from the system are available. The theoretical choice of the scaling parameter $\beta_{n,T}$ for the confidence intervals in \eqref{gp:lemma:rkhs_confidence_ivals} can be conservative and we choose a fixed value of $\beta_{n,T} = 2$ instead, following~\cite{Berkenkamp2017Safe}. For improved numerical stability and computational efficiency, we limit the number of training points used to update the GP to $150$ and use the \textit{maximum variance} selection procedure (see e.g. \cite{Jain2018Learning}) to sub-select the most informative samples in cases where more samples are available, however there exist more sophisticated, provably near-optimal selection procedures \cite{Krause2008NearOptimal}.

\begin{figure*}[t]
\includegraphics{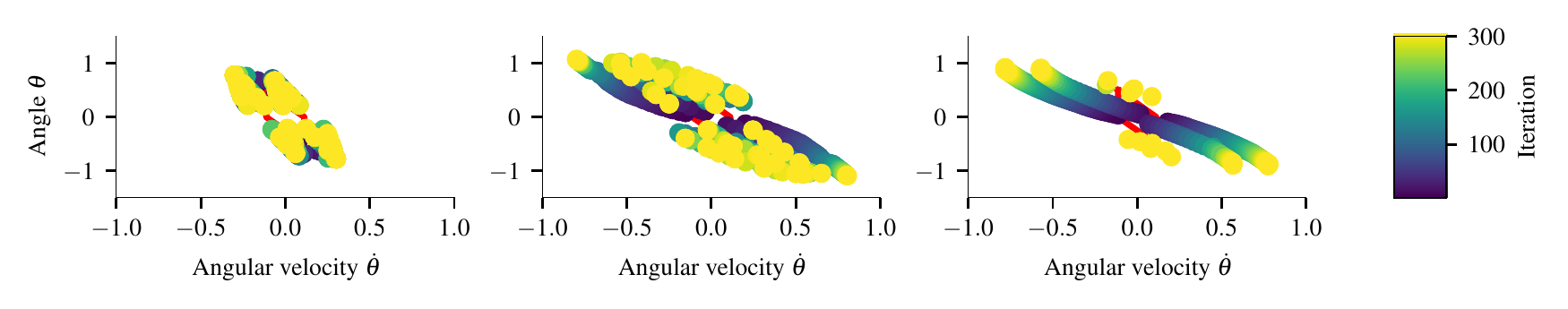}
\caption{Visualization of the samples acquired in the static exploration setting in~\cref{ssec:static_exploration} for $T \in \{1,4,5\}$. The algorithm plans informative paths to the safe set $\mathcal{X}_{\mathrm{safe}}$ (red polytope in the center). The baseline sample set for $T=1$ (left) is dense around origin of the system. For $T=4$ (center) we get the optimal trade-off between cautiousness due to a long horizon and limited length of the return trajectory due to a short horizon. The exploration for $T=5$ (right) is too cautious, since the propagated uncertainty at the final state is too large.}
\label{exp:fig:static_exploration:sampleset}
\end{figure*}

For experiments that employ a performance trajectory $m_{perf}$ as introduced in \cref{safe_rl}, we choose a common uncertainty propagation technique, that approximates the intractable posterior of the GP statistical model with uncertain, Gaussian inputs  by another Gaussian. That is, given a Gaussian distributed probabilistic belief of our system $X_t \sim \mathcal{N}(m_t,S_t)$, we approximate the intractable integral 
\begin{equation}
    p(\cdot|m_{t+1},S_{t+1}) = \mathbb{E}_{x_t \sim X_t}[ p_{\mathcal{GP}}(\cdot|x_t,u_t)]
\end{equation}
with the Gaussian $X_{t+1} \sim \mathcal{N}(m_{t+1},S_{t+1})$, where $p_{\mathcal{GP}}(\cdot|x_t,u_t)$ is the density function of the GP posterior for a deterministic input \eqref{found:gp:pred_mu}, \eqref{found:gp:pred_var}. Under a sequence of control inputs\footnote{Many approaches in the literature, including the one we use, also allow for linear state-feedback controllers as inputs.}, this results in a sequence of Gaussian random variables as a Bayesian approximation of a trajectory of the true system. We choose a technique where $m_{t+1}$ and $S_{t+1}$ are given by a Taylor approximation of the first and second moment of the posterior, respectively \cite{Girard2003Multiplestep}, but we note that a number approaches exist in this category that could be readily used. 

We try to actively encourage exploration by employing the \emph{saturating cost function} $c_{sc}(x) = 1 - \exp(-\frac{1}{2}(x-x_g)^{\mathrm{T}}W(x-x_g))$ as it behaves similarly to a squared distance function when close to the target $x_g$ and prefers uncertain states when being far away from the target state \cite{Deisenroth2011PILCO}.
Since our stochastic performance trajectory is given by a sequence of Gaussians, we can compute the corresponding surrogate objective,
\begin{equation}
\label{mpc_rl:sat_cost}
    \mathbb{E}[\sum_{t=0}^{H-1} \gamma^t c_{sc}(X_t,x_g) | X_0 = x_0], 
\end{equation}
in closed form. We can now replace \eqref{mpc_rl:eq:approx_cum_cost} with this objective function in our MPC formulation. 

\subsection{Inverted pendulum exploration}
\label{ssec:static_exploration}
As an initial experiment, we aim to safely identify a partially unknown inverted pendulum system through exploration, without trying to solve a specific task. To do so, we attempt to iteratively collect the most informative samples of the system, while preserving its safety. To evaluate the exploration performance, we use the mutual information $I(g_{\mathcal{Z}_n},g)$ between the collected samples $\mathcal{Z}_n = \{z_1,..,z_n\} \cup \mathcal{Z}_0$ and the GP prior on the unknown model-error~$g$, which can be computed in closed-form~\cite{Srinivas2010Gaussian}.

The continuous-time dynamics of the pendulum are given by $ml^2\ddot{\theta} = gml\sin(\theta) - \eta \dot{\theta} + u$, where $m = 0.15 \unit{kg}$ and $l=0.5\unit{m}$ are the mass and length of the pendulum, respectively, $\eta = 0.1 \unitfrac{Nms}{rad}$ is a friction parameter, and $g = 9.81 \unitfrac{m}{s^2}$ is the gravitational constant. The state of the system $x = (\theta,\dot{\theta})$ consists of the angle $\theta$ and angular velocity $\dot{\theta}$ of the pendulum. The system is underactuated with control constraints $\mathcal{U} = \{ u \in \R |-1 \leq u \leq 1\}$. Due to these limits, the pendulum becomes unstable and falls down beyond a certain angle. The origin of the system corresponds to the pendulum standing upright. The prior model $h$ is given by the linearized and discretized system around the origin, albeit with friction neglected the mass the pendulum being lower than for the true system as in ~\cite{Berkenkamp2017Safe}. The safety controller~$\pi_{\mathrm{safe}}$ is a discrete-time, infinite horizon linear quadratic regulator (LQR,\cite{Kwakernaak1972Linear}) of the \textit{true} system $f$ linearized and discretized around the origin with cost matrices
$ Q = \mathrm{diag}([1,2])$, $R = 20$. The corresponding safety region $\mathcal{X}_{\mathrm{Safe}}$ is given by a conservative polytopic inner-approximation of the true region of attraction of $\pi_{\mathrm{safe}}$. We do not impose state constraints, i.e.~$\mathcal{X} = \R^2$. However the terminal set constraint \eqref{main:safempc:mpc:mpc_problem:d} of the MPC problem \eqref{main:safempc:mpc:mpc_problem} acts as a stability constraint and prevents the pendulum from falling.
\subsubsection{Static Exploration}

For a first experiment, we assume that the system is \textit{static}, so that we can reset the system to an arbitrary state $x_n \in \R^2$ in every iteration. In the static case and without terminal set constraints, a provably close-to-optimal exploration strategy is to, at each iteration~$n$, select state-action pair~$z_{n+1}$ with the largest predictive standard deviation~\cite{Srinivas2010Gaussian}
\begin{equation}
\label{exp:max_gp_confidence_intervals}
z_{n+1} = \underset{z \in \mathcal{X} \times \mathcal{U}}{\arg\max} \sum_{1 \leq j \leq p } \sigma_{n,j}(z),
\end{equation}
where $\sigma_{n,j}^2(\cdot)$ is the predictive variance~\cref{found:gp:pred_var} of the $j$th $\mathcal{GP}(0,k_j)$ at the $n$th iteration. Inspired by this, at each iteration we collect samples by solving the MPC problem \eqref{main:safempc:mpc:mpc_problem} with cost function~$J_n = -\sum_{j=1}^{p} \sigma_{n,j}(x_0,u_0)$, where we additionally optimize over the initial state $x_0 \in \mathcal{X}$. Hence, we visit high-uncertainty states, but only allow for state-action pairs $z_n$ that are part of a feasible return trajectory to the safe set $\mathcal{X}_{\mathrm{Safe}}$.

Since optimizing over the initial state is highly non-convex, we solve the problem iteratively with $25$ random initializations to obtain a good approximation of the global minimizer. After every iteration, we update the sample set $\mathcal{Z}_{n+1} = \mathcal{Z}_n \cup \{z_n\}$, collect an observation $(z_n,\hat{f}_n)$ and update the GP models. We apply this procedure for varying horizon lengths.

The resulting sample sets are visualized for varying horizon lengths $T \in \{1,..,5\}$ with $300$ iterations in \cref{exp:fig:static_exploration:sampleset}, while \cref{exp:fig:static_exploration:inf_gain} shows how the mutual information of the sample sets $\mathcal{Z}_i,\, i = 0,..,n$ for the different values of $T$. For short time horizons ($T=1$), the algorithm can only slowly explore, since it can only move one step outside of the safe set. This is also reflected in the mutual information gained, which levels off quickly. For a horizon length of $T=4$, the algorithm is able to explore a larger part of the state-space, which means that more information is gained. For larger horizons, the predictive uncertainty of the final state is too large to explore effectively, which slows down exploration initially, when we do not have much information about our system. The results suggest that our approach could further benefit from adaptively choosing the horizon during operation, e.g. by employing a variable horizon MPC approach \cite{RichardsArthur2006Robust}, or by increasing the horizon when the mutual information saturates for the current horizon.
\begin{figure}[t]
\includegraphics{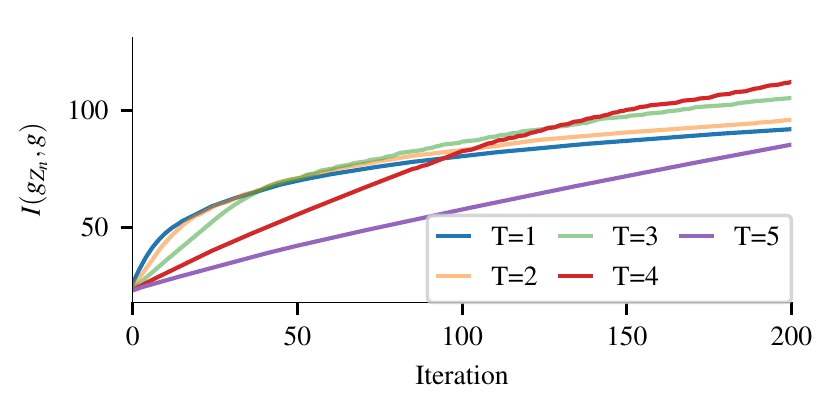}
\caption{Mutual information $I(g_{\mathcal{Z}_n},g), \, n=1,..,200$ for horizon lengths $T \in \{1,..,5\}$. Exploration settings with shorter horizon gather more informative samples at the beginning, but less informative samples in the long run. Longer horizon lengths result in less informative samples at the beginning, due to uncertainties being propagated over long horizons. However, after having gathered some knowledge they quickly outperform the smaller horizon settings. The best trade off is found for $T=4$.
}
\label{exp:fig:static_exploration:inf_gain}
\end{figure}

\subsubsection{Dynamic Exploration}

As a second experiment, we collect informative samples during operation; without resetting the system at every iteration. Starting at~$x_0 \in \mathcal{X}_\mathrm{safe}$, we apply \cref{main:safempc:algo:safety_algorithm} over $200$ iterations.
We consider two settings. In the first, we solve the MPC problem~\eqref{main:safempc:mpc:mpc_problem} with $-J_n$ given by \eqref{exp:max_gp_confidence_intervals}, similar to the previous experiments. In the second setting, we additionally plan a performance trajectory as proposed in \cref{safe_rl}. We define the cost-function $-J_t = \sum_{t=0}^{H} \mathrm{trace}(S_t^{1/2}) - \sum_{t=1}^{T} (m_t-p_t)^TQ_{\mathrm{perf}}(m_t-p_t)$, which maximizes the sum of predictive confidence intervals along the trajectory $m_1,..,m_H$, while penalizing deviation from the safety trajectory. We choose~$r = 1$ in the problem \eqref{main:safempc:perf:mpc_problem}, i.e. the first action of the safety trajectory and performance trajectory are the same. As in the static setting, we update our GP models after every iteration.

We evaluate both settings for varying $T \in \{1,..,5\}$ and fixed $H=5$ in terms of their mutual information in \cref{exp:fig:dynamic_expl:information_gain}. We observe a similar behavior as in the static exploration experiments and get the best exploration performance for $T=4$, with a slight degradation of performance for $T=5$ after 200 iterations. By comparing the exploration performance between iteration $50$ and $200$, we can see that influence of longer return trajectories on the exploration performance only comes into play after a certain number of iterations. This can be seen by comparing the similar performance of $T=3$ and $T=4$ after $50$ iterations with the significantly improved performance for $T=4$ after $200$ iterations. The setting $T=3$ during the same period only sees modest performance improvements. We can see that, except for $T=1$, the performance trajectory decomposition setting consistently outperforms the standard setting. Planning a performance trajectory (green) provides the algorithm with an additional degree of freedom, which leads to drastically improved exploration performance.

\begin{figure}[t]
\includegraphics{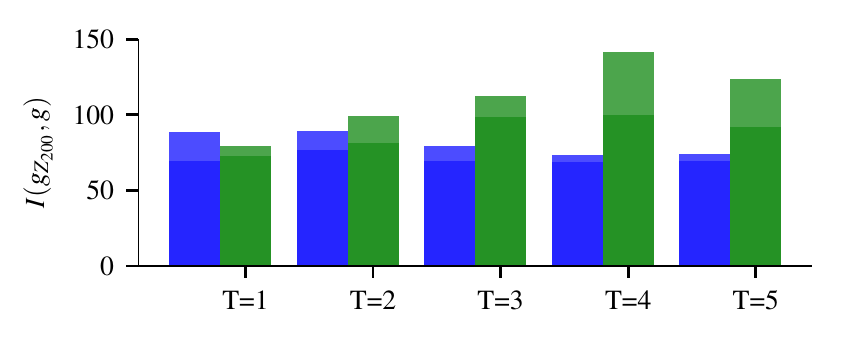}
\caption{Comparison of the information gathered from the system after $50$ (dark colors) and $200$ (light colors) iterations for the standard setting (blue) and the setting where we plan an additional performance trajectory (green). 
}
\label{exp:fig:dynamic_expl:information_gain}
\end{figure}

\subsection{Cart-pole safe reinforcement learning}
We tackle a cart-pole balancing task, where and underactuated cart learns to balance a pendulum in upright position while moving along a rail from starting position to a goal state. We initialize the pendulum in an upright position, with an initial position of the cart at $x^{cart}_0 = -2$  and we want to drive the system to the goal-position at $x^{cart}_g = 2.6$. We limit the length of the rail by $x^{cart} \in [-10,3.0]$ and simulate a floor, the pendulum is not allowed to hit, i.e. $\theta \in [-90,90]$ degree, where $\theta$ is the angle of the pendulum.
 The equations of motion of the system are given by
\begin{align*}
    (M+m)\ddot{x}^{cart} - ml\ddot{\theta}\cos{\theta}+ ml\dot{\theta}^2\sin{\theta} = 0, \\
     ml\ddot{\theta}-m\dot{x}^{cart}\cos{\theta}  -mg\sin{\theta} = u - \eta\dot{x}^{cart},
\end{align*}
where the velocity of the cart and the angular velocity are given by $\dot{x}^{cart}$ and $\dot{\theta}$, respectively. The mass of the cart is $M = 0.5 \unit{kg}$ and the corresponding mass of the pendulum is given by $m = 0.5 \unit{kg}$. The gravitational constant is again given by $g = 9.81  \unitfrac{m}{s^2}$ and the friction coefficient of the rail is $\eta = 0.1$. The origin of the system describes the cart at rest with the pendulum in an upright position The cart is strictly limited in its capability to overshoot the goal position due to the constraint $x^{cart} < 3$. The angular constraints $\theta \in [-90,90] \deg$ require the system to remain stable while driving to the goal-position. The actuators of the cart are limited to $u \in [-5,5] $. Again, we use a LQR controller based on the linearized true system with cost-matrices $Q= \diag([4,8,12,2])$, $R = 40$  as the safety controller $\pi_{\mathrm{safe}}$.

We apply the episode-based safe RL \cref{mpc_rl:algo:safe_rl_episode} over $8$ episodes, where in each episode we interact with the system over $n_{steps} = 50$ time steps. We then measure the performance of a rollout $x_0,..,x_{n_{steps}}$ via the accumulated squared distance to the goal $ C_{ep} =  \sum_{t=0}^{n_{steps}} 0.1 (x^{cart}_t-x_g)^2$. All reported results are averaged over $6$ repetitions of the corresponding experiment.

We compare the performance of the SafeMPC algorithm with and without the probabilistic planning module introduced in \cref{safe_rl}. In the settings where we use a performance trajectory, we set the number of planning steps to $H=15$ and choose the expected squared exponential cost function \eqref{mpc_rl:sat_cost} as the performance objective.
In the case of no performance trajectory, i.e. $H=0$, we choose the objective $J_t = \sum_{t=0}^{T-1} c_{rl}(p^{cart}_t-x_g)^2$, where $p_0,..,p_{T-1}$ are the centers of the ellipsoids in the safety trajectory at each time step.

\subsubsection{Influence of the performance trajectory}

We report the performance of the last episode for $T \in \{1,2,3,4\}$ and $H \in \{0,15\}$ in \cref{exp:rl:fig:final_ep_std_perf}.
In \cref{exp:rl:fig:all_ep_std_perf}, we compare the summed cost $C_{ep}$ for each episode with $H = 15$ and $T \in \{1,2,3,4\}$.     

As for the exploration experiments, we can see that planning a performance trajectory greatly improves the performance of our system. In this RL task, the agent even fails to find any reasonable strategy when planning without using the additional performance trajectory to estimate the long-term utility of its actions. In the settings with $T \in \{1,2\}$, the performance seems to improve quite steadily over the course of episodes. Whereas, for $T \in \{3,4\}$, the performance seems to fluctuate more.  This could be due to the system switching to the safety trajectory more often, as longer safety trajectories allow more aggressive maneuvers from which the system then has to recover again. This seems to be harming overall performance. In future work, we want to investigate how to alleviate the negative effects of this switching control design. We note that, depending on the design of the RL problem, much longer horizons of the safety trajectory may be beneficial.

\begin{figure}[t]
\includegraphics{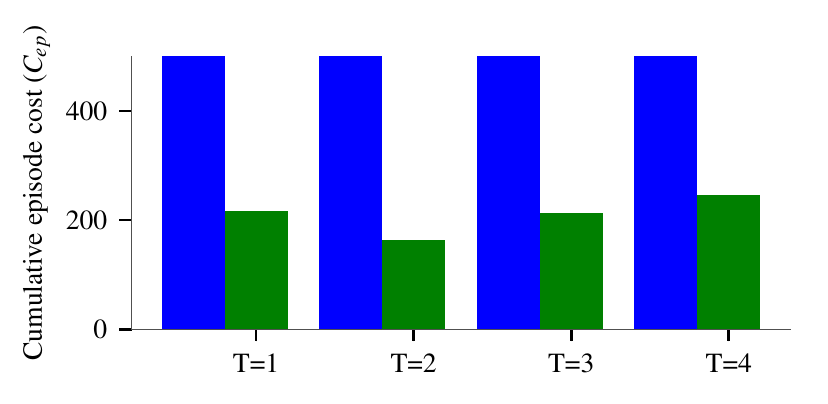}
\caption{Comparison of the performance of RL agents with varying $T \in \{1,2,3,4\}$ after eight episodes for the setting with (green) and without performance trajectory (blue).
}
\label{exp:rl:fig:final_ep_std_perf}
\end{figure}

\begin{figure}[t]
\includegraphics{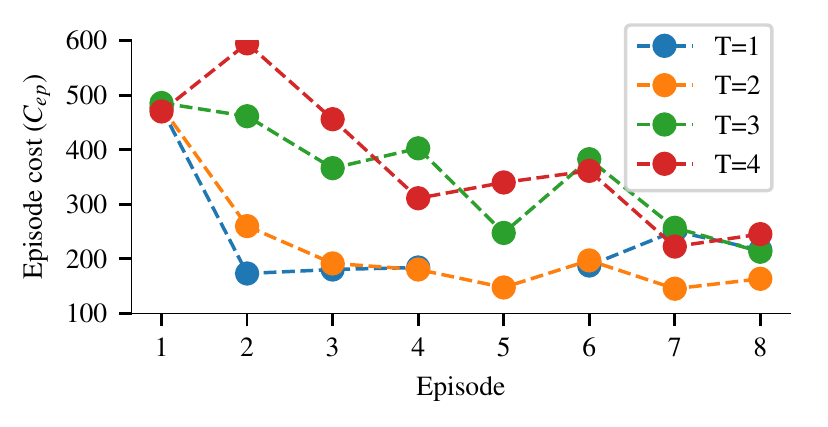}
\caption{Performance of the RL agents with safety trajectory length $T \in \{1,2,3,4\}$ and performance trajectory length $H=15$.
}
\label{exp:rl:fig:all_ep_std_perf}
\end{figure}

\subsubsection{Baseline comparison}
We compare our approach to a baseline, where we set $T=0$, i.e. we remove the safety trajectory and only consider the performance trajectory with the objective \eqref{mpc_rl:sat_cost}. We use probabilistic chance constraints to approximately enforce the state and control boundaries, as detailed in \cite{Hewing2017Cautious}. We consider different lengths of the performance trajectory $H \in \{5,10,15\}$ and report the percentage of rollouts that were successful and the accumulated cost for the different settings in \cref{exp:table:performance_only}. We can see that the system crashes in a large number of rollouts in all three settings, where for $H=5$ the system crashes almost constantly. There also seem to be two types of shortcomings that lead to safety failures. For short horizons ($H=5$) the problem seems to lie in the fact that we do not account for the long-term effect on safety when taking an action, whereas for long horizons ($H=15$), finding a feasible solution to the MPC problem seems to be an issue, leading to safety failures due to the missing fallback strategy.

We note that the computational complexity of both approaches is dominated by the time to compute the GP predictive uncertainty \eqref{found:gp:pred_var} along the planned trajectories in each iteration of the solver. Hence, when carefully optimized for runtime, it should be indeed possible to execute our proposed algorithm in real-time \cite{Hewing2017Cautious, Boedecker2014Approximate}.  
\begin{table}[t]
\begin{center}  
\label{exp:table:performance_only}
\caption{Ratio of failed rollouts of all episodes and cumulative final episode cost (averaged over successful rollouts) for varying lengths $H \in \{5,10,15\}$ of the performance trajectory. Lower is better for both benchmarks.} 
\begin{tabular}{c||cc|cc}
 & \multicolumn{2}{c}{\underline{Cautious MPC}} & \multicolumn{2}{c}{\underline{SafeMPC ($T=2$)}} \\
$H$&Failures[\%] & $C_{ep}$ & Failures[\%] & $C_{ep}$\\\hline
5 & 87.5 & 281.88 & 0.0 & $> 1000$ \\
10& 10.4 & 164.26 & 0.0 & 661.04 \\
15& 18.7 & 153.16 & 0.0 & 163.42 \\
\end{tabular}
\\[10pt]
\label{tab:my_label}
\end{center}
\end{table}


\section{Conclusion}
\label{discuss}
We developed a safe learning-based MPC framework that can solve reinforcement learning tasks under state and control constraints. We proposed two reliable uncertainty propagation techniques that underlie the algorithm. By combining the safety features of the learning-based MPC framework with techniques from model-based RL, we can guarantee the safety of the system while learning a given task. We experimentally showed that our proposed RL algorithm is capable of learning a task in a simulated cart-pole system without falling over or overshooting the end of the rail. 
\ifCLASSOPTIONcaptionsoff
  \newpage
\fi



%
\bibliographystyle{plain}        
\bibliography{autosam} 

\begin{thebibliography}{10}

\bibitem{Achiam2017Constrained}
Joshua Achiam, David Held, Aviv Tamar, and Pieter Abbeel.
\newblock Constrained {{Policy Optimization}}.
\newblock {\em arXiv:1705.10528 [cs]}, May 2017.

\bibitem{Akametalu2014Reachabilitybased}
A.~K. Akametalu, J.~F. Fisac, J.~H. Gillula, S.~Kaynama, M.~N. Zeilinger, and
  C.~J. Tomlin.
\newblock Reachability-based safe learning with {{Gaussian}} processes.
\newblock In {\em In {{Proc}}. of the {{IEEE Conference}} on {{Decision}} and
  {{Control}} ({{CDC}})}, pages 1424--1431, December 2014.

\bibitem{Altman1999Constrained}
Eitan Altman.
\newblock {\em Constrained {{Markov Decision Processes}}}.
\newblock {CRC Press}, March 1999.

\bibitem{Andersson2013b}
Joel Andersson.
\newblock {\em A General-Purpose Software Framework for Dynamic Optimization}.
\newblock {{PhD}} thesis, Arenberg Doctoral School, KU Leuven, Leuven, Belgium,
  October 2013.

\bibitem{asselborn2013control}
Leonhard Asselborn, Dominic Gross, and Olaf Stursberg.
\newblock Control of uncertain nonlinear systems using ellipsoidal reachability
  calculus.
\newblock {\em In Proc. of the International Federation of Automatic Control
  (IFAC)}, 46(23):50--55, 2013.

\bibitem{Aswani2013Provably}
Anil Aswani, Humberto Gonzalez, S.~Shankar Sastry, and Claire Tomlin.
\newblock Provably safe and robust learning-based model predictive control.
\newblock {\em Automatica}, 49(5):1216--1226, May 2013.

\bibitem{Berkenkamp2016Safe}
F.~Berkenkamp, R.~Moriconi, A.~P. Schoellig, and A.~Krause.
\newblock Safe learning of regions of attraction for uncertain, nonlinear
  systems with {{Gaussian}} processes.
\newblock In {\em In {{Proc}}. of the {{IEEE Conference}} on {{Decision}} and
  {{Control}} ({{CDC}})}, pages 4661--4666, December 2016.

\bibitem{Berkenkamp2017Safe}
Felix Berkenkamp, Matteo Turchetta, Angela~P. Schoellig, and Andreas Krause.
\newblock Safe model-based reinforcement learning with stability guarantees.
\newblock {\em In Proc. of Neural Information Processing Systems (NIPS)}, 1705,
  May 2017.

\bibitem{Boedecker2014Approximate}
J.~Boedecker, J.~T. Springenberg, J.~W\"ulfing, and M.~Riedmiller.
\newblock Approximate real-time optimal control based on sparse {{Gaussian}}
  process models.
\newblock In {\em 2014 {{IEEE Symposium}} on {{Adaptive Dynamic Programming}}
  and {{Reinforcement Learning}} ({{ADPRL}})}, pages 1--8, December 2014.

\bibitem{Breiman1993Deterministic}
Leo Breiman and Adele Cutler.
\newblock A deterministic algorithm for global optimization.
\newblock {\em Mathematical Programming}, 58(1-3):179--199, January 1993.

\bibitem{Cao2017Gaussian}
Gang Cao, Edmund M.-K. Lai, and Fakhrul Alam.
\newblock Gaussian process model predictive control of an unmanned quadrotor.
\newblock {\em Journal of Intelligent \& Robotic Systems}, 88(1):147--162,
  October 2017.

\bibitem{Carson2013Robust}
John~M. Carson, Beh{\c c}et A{\c c}\i{}kme{\c s}e, Richard~M. Murray, and
  Douglas~G. MacMartin.
\newblock A robust model predictive control algorithm augmented with a reactive
  safety mode.
\newblock {\em Automatica}, 49(5):1251--1260, May 2013.

\bibitem{Chen2018Approximating}
S.~Chen, K.~Saulnier, N.~Atanasov, D.~D. Lee, V.~Kumar, G.~J. Pappas, and
  M.~Morari.
\newblock Approximating {{Explicit Model Predictive Control Using Constrained
  Neural Networks}}.
\newblock In {\em 2018 {{Annual American Control Conference}} ({{ACC}})}, pages
  1520--1527, June 2018.

\bibitem{Chow2019Lyapunovbased}
Yinlam Chow, Ofir Nachum, Aleksandra Faust, Edgar {Duenez-Guzman}, and Mohammad
  Ghavamzadeh.
\newblock Lyapunov-based {{Safe Policy Optimization}} for {{Continuous
  Control}}.
\newblock {\em arXiv:1901.10031 [cs, stat]}, January 2019.

\bibitem{Dalal2018Safe}
Gal Dalal, Krishnamurthy Dvijotham, Matej Vecerik, Todd Hester, Cosmin
  Paduraru, and Yuval Tassa.
\newblock Safe {{Exploration}} in {{Continuous Action Spaces}}.
\newblock {\em arXiv:1801.08757 [cs]}, January 2018.

\bibitem{Dean2018Safely}
Sarah Dean, Stephen Tu, Nikolai Matni, and Benjamin Recht.
\newblock Safely {{Learning}} to {{Control}} the {{Constrained Linear Quadratic
  Regulator}}.
\newblock {\em arXiv:1809.10121 [cs, math, stat]}, September 2018.

\bibitem{Deisenroth2011PILCO}
Marc~Peter Deisenroth and Carl~Edward Rasmussen.
\newblock {{PILCO}}: A model-based and data-efficient approach to policy
  search.
\newblock In {\em In {{Proceedings}} of the {{International Conference}} on
  {{Machine Learning}}}, pages 465--472, 2011.

\bibitem{Ernst2009Reinforcement}
D.~Ernst, M.~Glavic, F.~Capitanescu, and L.~Wehenkel.
\newblock Reinforcement learning versus model predictive control: A comparison
  on a power system problem.
\newblock {\em In IEEE Transactions on Systems, Man, and Cybernetics},
  39(2):517--529, April 2009.

\bibitem{Filippova2017Ellipsoidal}
Tatiana~F. Filippova.
\newblock Ellipsoidal estimates of reachable sets for control systems with
  nonlinear terms.
\newblock {\em In Proc. of the International Federation of Automatic Control
  (IFAC)}, 50(1):15355--15360, July 2017.

\bibitem{Garcia2012Safe}
Javier Garc\'ia and Fernando Fern\'andez.
\newblock Safe {{Exploration}} of {{State}} and {{Action Spaces}} in
  {{Reinforcement Learning}}.
\newblock {\em J. Artif. Int. Res.}, 45(1):515--564, September 2012.

\bibitem{Girard2003Multiplestep}
A.~Girard, C.~E. Rasmussen, J.~{Qui\~nonero-Candela}, R.~{Murray-Smith},
  Becker, S, S.~Thrun, and K.~Obermayer.
\newblock Multiple-step ahead prediction for non linear dynamic systems: {{A
  Gaussian Process}} treatment with propagation of the uncertainty.
\newblock In {\em Sixteenth {{Annual Conference}} on {{Neural Information
  Processing Systems}} ({{NIPS}} 2002)}, pages 529--536. {MIT Press}, October
  2003.

\bibitem{Golub2012Matrix}
Gene~H. Golub and Charles F.~Van Loan.
\newblock {\em Matrix {{Computations}}}.
\newblock {JHU Press}, December 2012.

\bibitem{Hewing2017Cautious}
Lukas Hewing and Melanie~N. Zeilinger.
\newblock Cautious model predictive control using {{Gaussian}} process
  regression.
\newblock {\em arXiv preprint arXiv:1705.10702}, 2017.

\bibitem{Jain2018Learning}
Achin Jain, Truong~X. Nghiem, Manfred Morari, and Rahul Mangharam.
\newblock Learning and {{Control Using Gaussian Processes}}: {{Towards Bridging
  Machine Learning}} and {{Controls}} for {{Physical Systems}}.
\newblock In {\em Proceedings of the 9th {{ACM}}/{{IEEE International
  Conference}} on {{Cyber}}-{{Physical Systems}}}, {{ICCPS}} '18, pages
  140--149, Piscataway, NJ, USA, 2018. {IEEE Press}.

\bibitem{Kamthe2017DataEfficient}
Sanket Kamthe and Marc~Peter Deisenroth.
\newblock Data-{{Efficient Reinforcement Learning}} with {{Probabilistic Model
  Predictive Control}}.
\newblock {\em arXiv:1706.06491 [cs, stat]}, June 2017.

\bibitem{Koller2018Learningbased}
Torsten Koller, Felix Berkenkamp, Matteo Turchetta, and Andreas Krause.
\newblock Learning-based {{Model Predictive Control}} for {{Safe Exploration}}.
\newblock {\em In Proc. of the IEEE Conference on Decision and Control (CDC)},
  March 2018.

\bibitem{Krause2008NearOptimal}
Andreas Krause, Ajit Singh, and Carlos Guestrin.
\newblock Near-{{Optimal Sensor Placements}} in {{Gaussian Processes}}:
  {{Theory}}, {{Efficient Algorithms}} and {{Empirical Studies}}.
\newblock {\em Journal of Machine Learning Research}, 9(Feb):235--284, 2008.

\bibitem{Kurzhanskii1997Ellipsoidal}
A.~B. Kurzhanskii and Istvan V\'alyi.
\newblock {\em Ellipsoidal Calculus for Estimation and Control}.
\newblock {Boston, MA : Birkh\"auser}, 1997.

\bibitem{Kwakernaak1972Linear}
Huibert Kwakernaak and Raphael Sivan.
\newblock {\em Linear Optimal Control Systems}, volume~1.
\newblock {Wiley-interscience New York}, 1972.

\bibitem{Ostafew2016Robust}
Chris~J. Ostafew, Angela~P. Schoellig, and Timothy~D. Barfoot.
\newblock Robust constrained learning-based {{NMPC}} enabling reliable mobile
  robot path tracking.
\newblock {\em The International Journal of Robotics Research},
  35(13):1547--1563, November 2016.

\bibitem{Rasmussen2006Gaussian}
Carl~Edwar Rasmussen and Christopher~K.I. Williams.
\newblock {\em Gaussian Processes for Machine Learning.}
\newblock {MIT Press, Cambridge MA}, 2006.

\bibitem{rawlings2009model}
James~Blake Rawlings and David~Q Mayne.
\newblock {\em Model Predictive Control: {{Theory}} and Design}.
\newblock {Nob Hill Pub.}, 2009.

\bibitem{RichardsArthur2006Robust}
{Richards Arthur} and {How Jonathan P.}
\newblock Robust variable horizon model predictive control for vehicle
  maneuvering.
\newblock {\em International Journal of Robust and Nonlinear Control},
  16(7):333--351, February 2006.

\bibitem{Rosolia2019SampleBased}
Ugo Rosolia and Francesco Borrelli.
\newblock Sample-{{Based Learning Model Predictive Control}} for {{Linear
  Uncertain Systems}}.
\newblock {\em arXiv:1904.06432 [cs]}, April 2019.

\bibitem{Sadraddini2016Provably}
S.~Sadraddini and C.~Belta.
\newblock A provably correct {{MPC}} approach to safety control of urban
  traffic networks.
\newblock In {\em American {{Control Conference}} ({{ACC}})}, pages 1679--1684,
  July 2016.

\bibitem{Simon2013Nonlinear}
Daniel Simon, Johan L\"ofberg, and Torkel Glad.
\newblock Nonlinear {{Model Predictive Control}} using {{Feedback
  Linearization}} and {{Local Inner Convex Constraint Approximations}}.
\newblock In {\em 2013 {{European Control Conference}}, {{July}} 17-19,
  {{Zurich}}, {{Switzerland}}}, pages 2056--2061, 2013.

\bibitem{Soloperto2018LearningBased}
Raffaele Soloperto, Matthias~A. M\"uller, Sebastian Trimpe, and Frank
  Allg\"ower.
\newblock Learning-{{Based Robust Model Predictive Control}} with
  {{State}}-{{Dependent Uncertainty}}.
\newblock {\em IFAC-PapersOnLine}, 51(20):442--447, January 2018.

\bibitem{Srinivas2010Gaussian}
Niranjan Srinivas, Andreas Krause, Sham Kakade, and Matthias Seeger.
\newblock Gaussian process optimization in the bandit setting: No regret and
  experimental design.
\newblock In {\em In {{Proc}}. of the {{International Conference}} on {{Machine
  Learning}} ({{ICML}})}, pages 1015--1022, 2010.

\bibitem{Sutton1998Reinforcement}
R.~S. Sutton and A.~G. Barto.
\newblock Reinforcement {{Learning}}: {{An Introduction}}.
\newblock {\em IEEE Transactions on Neural Networks}, 9(5):1054--1054,
  September 1998.

\bibitem{Hessem2002Closedloop}
D.~H. van Hessem and O.~H. Bosgra.
\newblock Closed-loop stochastic dynamic process optimization under input and
  state constraints.
\newblock In {\em In {{Proc}}. of the {{American Control Conference}}
  ({{ACC}})}, volume~3, pages 2023--2028, May 2002.

\bibitem{Vinogradska2016Stability}
Julia Vinogradska, Bastian Bischoff, Duy {Nguyen-Tuong}, Henner Schmidt, Anne
  Romer, and Jan Peters.
\newblock Stability of {{Controllers}} for {{Gaussian Process Forward Models}}.
\newblock In {\em Proceedings of the 33rd {{International Conference}} on
  {{International Conference}} on {{Machine Learning}} - {{Volume}} 48},
  {{ICML}}'16, pages 545--554, New York, NY, USA, 2016. {JMLR.org}.

\bibitem{Wabersich2018Linear}
Kim~P. Wabersich and Melanie~N. Zeilinger.
\newblock Linear model predictive safety certification for learning-based
  control.
\newblock {\em arXiv:1803.08552 [cs]}, March 2018.

\bibitem{Wabersich2018Safea}
Kim~P. Wabersich and Melanie~N. Zeilinger.
\newblock Safe exploration of nonlinear dynamical systems: {{A}} predictive
  safety filter for reinforcement learning.
\newblock {\em arXiv:1812.05506 [cs]}, December 2018.

\bibitem{Wabersich2018Safe}
Kim~P. Wabersich and Melanie~N. Zeilinger.
\newblock Safe exploration of nonlinear dynamical systems: {{A}} predictive
  safety filter for reinforcement learning.
\newblock {\em arXiv:1812.05506 [cs]}, December 2018.

\bibitem{Wachter2006Implementation}
Andreas W\"achter and Lorenz~T. Biegler.
\newblock On the implementation of an interior-point filter line-search
  algorithm for large-scale nonlinear programming.
\newblock {\em Mathematical Programming}, 106(1):25--57, March 2006.

\bibitem{wahba1990spline}
Grace Wahba.
\newblock {\em Spline Models for Observational Data}, volume~59.
\newblock {Siam}, 1990.

\bibitem{Wood1996Estimation}
G.~R. Wood and B.~P. Zhang.
\newblock Estimation of the {{Lipschitz}} constant of a function.
\newblock {\em Journal of Global Optimization}, 8(1):91--103, January 1996.

\bibitem{Xie2016Modelbased}
C.~Xie, S.~Patil, T.~Moldovan, S.~Levine, and P.~Abbeel.
\newblock Model-based reinforcement learning with parametrized physical models
  and optimism-driven exploration.
\newblock In {\em In {{Proc}}. of the {{IEEE International Conference}} on
  {{Robotics}} and {{Automation}} ({{ICRA}})}, pages 504--511, May 2016.

\end{thebibliography}

\end{document}